\documentclass[twocolumn,showpacs,amsmath,amssymb,10pt,aps]{revtex4}
  
\pagestyle{plain}\textheight24cm
\usepackage[utf8]{inputenc}
\usepackage[T1]{fontenc}
\usepackage{amsmath}
\usepackage{amsfonts}
\usepackage{graphicx}
\usepackage{yfonts}
\usepackage{soul}
\usepackage{color}
\usepackage[normalem]{ulem}
\usepackage{amsthm}
\usepackage{bm}
\usepackage{bbm}
\usepackage{mathtools}
\usepackage{array}

\DeclareMathOperator{\der}{d}

\def\<{\langle}
\def\>{\rangle}
\usepackage{placeins}
\usepackage{enumitem}% http://ctan.org/pkg/enumitem

\newcommand{\Tr}{\mathrm{Tr}}
\def\oper{{\mathchoice{\rm 1\mskip-4mu l}{\rm 1\mskip-4mu l}
{\rm 1\mskip-4.5mu l}{\rm 1\mskip-5mu l}}}
\newtheorem{Theorem}{Theorem}

\newtheorem{Proposition}{Proposition}
\newtheorem{Example}{Example}
\usepackage{caption}
\captionsetup{justification=raggedright,singlelinecheck=false}
\captionsetup[figure]{name={Fig.},labelsep=period}
%\captionsetup{justification=raggedright,singlelinecheck=false}
%\usepackage[font=small,justification=justified,format=plain]{caption}
\arraycolsep=1.4pt

\begin{document}

\title{\bf %Environmental noise as a way to optimize
Engineering fidelity of the generalized Pauli channels via legitimate memory kernels}
%\title{\bf On the noisy memory kernel evolution that enhances the generalized Pauli channels fidelity}
\author{{Katarzyna Siudzi{\'n}ska and Dariusz Chru{\'s}ci{\'n}ski }}
\affiliation{ Institute of Physics, Faculty of Physics, Astronomy and Informatics \\  Nicolaus Copernicus University,
Grudzi\k{a}dzka 5/7, 87--100 Toru{\'n}, Poland}

\begin{abstract}
We analyze the fidelity of the generalized Pauli channels governed by memory kernel master equations. It is shown that, by appropriate engineering of parameters of the corresponding memory kernel, the quantum evolution with non-local noise can have higher fidelity than the corresponding purely Markovian evolution governed by the Markovian semigroup. Similar engineering can substantially influence the evolution of quantum entanglement, entropy, and quantum coherence.
\end{abstract}

\maketitle

\section{Introduction}

Recently, much effort has been devoted to the analysis of open quantum systems \cite{open1,open2,open3}.
No realistic system is perfectly isolated due to the interaction with external environment, and therefore it has to be treated as an open system. Assuming that the interaction between the system and the environment is sufficiently weak, the well-known Born-Markov approximation can be applied to the evolution equation. This way, one derives the celebrated Markovian master equation
\begin{equation}
\dot{\rho}(t) = \mathcal{L}[\rho(t)] ,
\end{equation}
with the Gorini-Kossakowski-Sudarshan-Lindblad (GKSL) generator \cite{GKS,L}
\begin{equation}\label{GKSL}
\mathcal{L}[\rho]= -i[H,\rho] + \sum_{\alpha}\gamma_\alpha \left(V_\alpha\rho V_\alpha^\dagger -\frac 12\{V_\alpha^\dagger V_\alpha,\rho\}\right)
\end{equation}
and $\gamma_\alpha$ being (positive) decoherence/dissipation rates (for an intriguing history and importance of the GKSL master equation see a recent review \cite{CP17}). The noise operators $V_\alpha$ are responsible for decoherence and dissipation phenomena. In general, such environmental noise has a detrimental impact on a variety of quantum information processing tasks.  As a result, the quantum error correction has gained a considerable relevance and ultimately became a separate field of research \cite{lidar13}. There has also been a rapid development of passive schemes to protect quantum states from noise, such as the decoherence-free subspaces (see for example \cite{lidar98}). However, the way we perceive the role of the environmental noise has changed radically due to the seminal paper \cite{verstraete}, where it was shown that dissipation can be used to enhance quantum information processing. In particular, quantum information can be encoded in a set of steady states of a strongly dissipative system and manipulated coherently by using an effective dissipation-projected Hamiltonian \cite{zanardi14, marshall16}. 
It was shown that the memory effects caused by environmental noise can improve the channel fidelity \cite{Bogna1}, and also play a significant role in quantum thermodynamics, influencing for example our ability of extracting the work \cite{Bogna2}. Authors of \cite{zanardi16} showed that it is possible to perform the universal quantum computations that are robust to certain types of errors. The robustness of adiabatic quantum computation was considered in \cite{Childs}.

In recent years, considerable efforts have been made to describe the quantum evolution beyond the standard Markovian master equation. In order to do this, one has to take into account the memory effects caused by the nontrivial influence of the environment (see \cite{NM1,NM2,NM3,NM4} for recent reviews). One popular approach uses the so-called  Nakajima-Zwanzig master equation \cite{nakajima,zwanzig}
\begin{equation}\label{NZ}
\dot{\rho}(t)=\int_0^tK(t-\tau)[\rho(\tau)]\der\tau,
\end{equation}
where the memory kernel $K(t)$ encodes nontrivial memory effects.

The algebraic structure of legitimate memory kernels is known only for a limited number of cases \cite{filip15,kasia17}. Yet, sufficient conditions for the memory kernel to generate dynamical maps have been determined \cite{darch16}. Also, large classes of legitimate quantum evolution have been provided, such as the quantum semi-Markov evolution \cite{darch17} or the quantum stochastic dynamics corresponding to non-Markovian classical processes \cite{vacchini16}.

A decrease of the error accumulation was achieved for the dissipative Markovian processes \cite{zanardi17} and their generalizations \cite{lidar05,darch17}. In particular, it was shown that adding noises to the Markovian evolution slows down the rate at which the state of the system approaches the steady state \cite{zanardi17}. This conclusion was based on investigating the minimal channel fidelity of the dynamical maps provided by the Markovian semigroup generator (\ref{GKSL}) and its extension by the non-local memory kernel. As examples, the authors provided the multipartite Pauli channels and the generalized amplitude damping channel. This remarkable result shows that, instead of overcoming the environmental noise, one can actually benefit from it.

In this paper, we analyze how the channel fidelity of the generalized Pauli channels \cite{nathanson07,kasia16,kasia17} changes in time for the
evolution governed by eq. (\ref{NZ}) with
\begin{equation}\label{}
K(t) = \delta(t)\, \mathcal{L} + \mathbb{K}(t).
\end{equation}
In the above equation, $\mathcal{L}$ is a legitimate Markovian generator, and the non-local term $\mathbb{K}$ does not involve the Dirac delta function, therefore it is purely non-local. Recall that, using Uhlmann's transition probability formula \cite{uhlmann}, one defines the fidelity \cite{jozsa} between two mixed quantum states $\rho$, $\sigma$ by
\begin{equation}\label{statesfidelity}
F(\rho, \sigma) = \left[ \Tr \left(\sqrt{\sqrt{\rho}\sigma\sqrt{\rho}}\right)\right]^2.
\end{equation}
Now, for a given quantum channel $\Lambda$, the extremal values of the channel fidelity on pure input states $|\psi\rangle$ are defined via
\begin{equation}\label{channelfidelity}
\begin{split}
f_{\min}(\Lambda):=\min_PF(P,\Lambda[P])=\min_P \Tr(P\Lambda[P]),\\
f_{\max}(\Lambda):=\max_PF(P,\Lambda[P])=\max_P \Tr(P\Lambda[P]),
\end{split}
\end{equation}
where $P= |\psi\rangle \langle \psi|$. Due to concavity, $f_{\min}(\Lambda)$ is also the minimal channel fidelity on mixed quantum states $\rho$ \cite{wilde,Zycz}. Henceforth, we refer to $f_{\min}$ and $f_{\max}$ simply as the minimal and maximal channel fidelity, respectively.

The extremal channel fidelities allow us to measure how much, in the best and worst case scenario, a given quantum channel distorts the initial quantum state. Therefore, the larger the value of the channel fidelity, the better the channel can preserve the quantum information sent through it. Measuring the channel fidelity and engineering optimal quantum channels are two of the current challenges in quantum information theory \cite{nielsen}. Naturally, in quantum information processing, we would like to engineer such quantum channels that are as close as possible to the identity channel. Through a proper choice of the system parameters, we demonstrate a number of cases where the channel fidelity of $\Lambda(t)$ generated by $K(t)$ is better than that of the Markovian channel generated by $\mathcal{L}$ itself. This way, we prove that non-local memory effects can be used to decrease the error rate associated with the quantum channel. Interestingly, a similar technique allows one to engineer the evolution of quantum entropy, quantum entanglement, and quantum coherence.

\section{Fidelity of the generalized Pauli channels}

Consider a $d$-dimensional Hilbert space that admits the maximal number of $d+1$ mutually unbiased bases (MUBs). It is known that this is the case for $d=p^r$ with a prime $p$ and a natural $r$ \cite{Wootters,MAX}. Recall that the bases $\{\psi_0^{(\alpha)},\dots,\psi_{d-1}^{(\alpha)}\}$ for $\alpha=1,\dots,d+1$ are mutually unbiased if for any $\beta\neq\alpha$,
\begin{equation}
\<\psi_k^{(\alpha)},\psi_l^{(\alpha)}\>=\delta_{kl},\qquad
|\<\psi_k^{(\alpha)},\psi_l^{(\beta)}\>|^2 = \frac 1d.
\end{equation}
Introducing the rank-1 projectors onto the MUB vectors $P^{(\alpha)}_l = |\psi^{(\alpha)}_l\>\< \psi^{(\alpha)}_l|$ allows us to define $d+1$ unitary operators
\begin{equation}\label{U}
U_{\alpha} = \sum_{l=0}^{d-1} \omega^{l} P_l^{(\alpha)} \ ,
\end{equation}
where $\omega = e^{2\pi i/d}$. Now, let us use them to construct $d+1$ completely positive maps
\begin{equation}
\mathbb{U}_\alpha[\rho] = \sum_{k=1}^{d-1}  U_{\alpha}^k \rho  U_{\alpha}^{k \dagger}.
\end{equation}
Finally, the generalized Pauli channel is defined as follows \cite{nathanson07,kasia16},
\begin{equation}\label{GPC}
\Lambda=p_0\oper+\frac{1}{d-1}\sum_{\alpha=1}^{d+1}p_\alpha\mathbb{U}_\alpha,
\end{equation}
where $p_\alpha$ is the probability distribution, and $\oper$ denotes the identity map. For $d=2$, one reproduces the Pauli channel
\begin{equation}\label{PC}
\Lambda[\rho]=\sum_{\alpha=0}^{3}p_\alpha\sigma_\alpha\rho\sigma_\alpha
\end{equation}
with $\sigma_0=\mathbb{I}$ and $\sigma_k$ being the Pauli matrices. The eigenvalue equations for the generalized Pauli channel read $\Lambda[\mathbb{I}]=\mathbb{I}$ and
\begin{equation}\label{GPC_eigenvalue_eq}
\Lambda[U_\alpha^k]=\lambda_\alpha U_\alpha^k,\qquad k=1,\ldots,d-1,
\end{equation}
with the eigenvalues
\begin{equation}\label{GPC_eigenvalues}
\lambda_\alpha=p_0+\frac{d}{d-1}p_\alpha-\frac{1}{d-1}\sum_{\beta=1}^{d+1} p_\beta.
\end{equation}
On the other hand, one can express the probability distribution in terms of the eigenvalue functions,
\begin{equation}\label{c1}
p_0=\frac{1}{d^2}\left[1+(d-1)\sum_{\alpha=1}^{d+1}\lambda_\alpha\right],
\end{equation}
\begin{equation}\label{c2}
p_\alpha=\frac{d-1}{d^2}\left[1+d\lambda_\alpha-\sum_{\beta=1}^{d+1} \lambda_\beta\right].
\end{equation}
Observe that $\Lambda$ is completely positive and trace preserving if and only if it satisfies the generalized Fujiwara-Algoet conditions \cite{Fujiwara, nathanson07, Zyczkowski}
\begin{equation}\label{Fuji-d}
-\frac{1}{d-1}\leq\sum_{\beta=1}^{d+1}\lambda_\beta\leq 1+d\min_{\beta}\lambda_\beta.
\end{equation}

\begin{Theorem}\label{THM}
The minimal and maximal channel fidelities on pure input states for the generalized Pauli channel $\Lambda$ defined by eq. (\ref{GPC}) are given by
\begin{align}
&f_{\min}(\Lambda)=\frac{1}{d}\left[1+(d-1)\lambda_{\rm min} \right],\label{A1}\\
&f_{\max}(\Lambda)=\frac{1}{d}\left[1+(d-1)\lambda_{\rm max} \right],\label{A2}
\end{align}
where $\lambda_{\rm min} = \min_\alpha \lambda_\alpha$ and $\lambda_{\rm max} = \max_\alpha \lambda_\alpha$.
\end{Theorem}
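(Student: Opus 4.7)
The plan is to expand an arbitrary pure projector $P$ in the orthonormal (Hilbert--Schmidt) basis of operators consisting of $\mathbb{I}$ together with the unitaries $U_\alpha^k$, and then reduce $\Tr(P\Lambda[P])$ to a convex combination of the eigenvalues $\lambda_\alpha$. First, I would establish the key orthogonality relation: using the MUB property $|\<\psi_k^{(\alpha)},\psi_l^{(\beta)}\>|^2=1/d$ for $\alpha\neq\beta$ together with $U_\alpha^k=\sum_l\omega^{lk}P_l^{(\alpha)}$, a direct calculation gives $\Tr[(U_\alpha^j)^\dagger U_\beta^k]=d\,\delta_{\alpha\beta}\delta_{jk}$ for $j,k\in\{1,\dots,d-1\}$, and each such $U_\alpha^k$ is traceless. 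Since $1+(d+1)(d-1)=d^2$, these operators form a complete orthogonal basis, so
\begin{equation*}
P=\frac{1}{d}\mathbb{I}+\sum_{\alpha=1}^{d+1}\sum_{k=1}^{d-1}c_{\alpha,k}\,U_\alpha^k,\qquad c_{\alpha,k}=\frac{1}{d}\Tr[(U_\alpha^k)^\dagger P],
\end{equation*}
with Hermiticity of $P$ enforcing $c_{\alpha,d-k}=c_{\alpha,k}^\ast$ (since $(U_\alpha^k)^\dagger=U_\alpha^{d-k}$).

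Second, I would apply the eigenvalue equation (\ref{GPC_eigenvalue_eq}) to obtain $\Lambda[P]=d^{-1}\mathbb{I}+\sum_{\alpha,k}\lambda_\alpha c_{\alpha,k}U_\alpha^k$ and then compute
\begin{equation*}
\Tr(P\Lambda[P])=\frac{1}{d}+d\sum_{\alpha=1}^{d+1}\lambda_\alpha\,x_\alpha,\qquad x_\alpha:=\sum_{k=1}^{d-1}|c_{\alpha,k}|^2\ge 0,
\end{equation*}
where cross-$\alpha$ terms drop out by the orthogonality relation and the condition $c_{\alpha,d-k}=c_{\alpha,k}^\ast$ converts the bilinear sum into $|c_{\alpha,k}|^2$. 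Purity $\Tr(P^2)=1$ then forces the normalization $\sum_\alpha x_\alpha=(d-1)/d^2$.

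Third, since $x_\alpha\ge 0$ and $\sum_\alpha x_\alpha=(d-1)/d^2$ is fixed, the linear functional $\sum_\alpha\lambda_\alpha x_\alpha$ is bounded above by $\lambda_{\max}(d-1)/d^2$ and below by $\lambda_{\min}(d-1)/d^2$. This immediately yields the two inequalities
\begin{equation*}
\tfrac{1}{d}[1+(d-1)\lambda_{\min}]\le\Tr(P\Lambda[P])\le\tfrac{1}{d}[1+(d-1)\lambda_{\max}].
\end{equation*}
Finally, to show these bounds are attained I would plug in the MUB projectors $P=P_l^{(\alpha)}$: a short calculation (using $U_\alpha^{-j}|\psi_l^{(\alpha)}\>=\omega^{-lj}|\psi_l^{(\alpha)}\>$ and, for $\beta\neq\alpha$, $\<\psi_l^{(\alpha)}|U_\beta^{-j}|\psi_l^{(\alpha)}\>=d^{-1}\sum_m\omega^{-mj}=0$) shows that $x_\alpha=(d-1)/d^2$ and $x_\beta=0$ for $\beta\neq\alpha$, giving $\Tr(P\Lambda[P])=d^{-1}[1+(d-1)\lambda_\alpha]$. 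Choosing $\alpha$ that realizes $\lambda_{\max}$ (resp.\ $\lambda_{\min}$) saturates the upper (resp.\ lower) bound.

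I do not expect a serious obstacle. The main technical input is the MUB-based orthogonality formula $\Tr[(U_\alpha^j)^\dagger U_\beta^k]=d\,\delta_{\alpha\beta}\delta_{jk}$, which does the bulk of the work; once it is in place, the reduction of $\Tr(P\Lambda[P])$ to a linear combination of the $\lambda_\alpha$ with convex weights is essentially algebraic, and attainment is witnessed by concrete MUB projectors. One small point to be careful about is tracking the index shift in the Hermiticity condition $c_{\alpha,d-k}=c_{\alpha,k}^\ast$, which is what converts the bilinear sum into a manifestly nonnegative $x_\alpha$.
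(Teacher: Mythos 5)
Your proposal is correct and follows essentially the same route as the paper: expand $P$ in the operator basis $\{\mathbb{I},U_\alpha^k\}$, use the eigenvalue equation to reduce $\Tr(P\Lambda[P])$ to a positive combination of the $\lambda_\alpha$ with weights normalized by $\Tr P^2=1$, and then extremize that linear functional. The one place you go beyond the paper is in explicitly exhibiting the MUB projectors $P_l^{(\alpha)}$ as witnesses that the bounds are attained; the paper only asserts which coefficient pattern achieves the extremum without producing a rank-one projector realizing it, so your extra step is a worthwhile completion rather than a deviation.
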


\begin{proof}
Let us take an arbitrary rank-1 projector and write it as
\begin{equation}
P=\frac 1d \left(\mathbb{I}+\sum_{\alpha=1}^{d+1}\sum_{k=1}^{d-1}x_{\alpha k}U_\alpha^k\right).
\end{equation}
When acting on such $P$, the channel $\Lambda$ transforms it into
\begin{equation}
\Lambda[P]=\frac 1d \left[\mathbb{I}+\sum_{\alpha=1}^{d+1}\sum_{k=1}^{d-1}
\lambda_\alpha x_{\alpha k}U_\alpha^k\right].
\end{equation}
Therefore, the channel fidelity of the generalized Pauli channel acting on $P$ reads
\begin{equation}\label{sch}
F(P,\Lambda[P])=\Tr(P\Lambda[P])=\frac{1}{d}\left(1+\sum_{\alpha=1}^{d+1}\lambda_\alpha
\sum_{k=1}^{d-1}|x_{\alpha k}|^2\right).
\end{equation}
We know that if $P$ is a rank-1 projector, then
\begin{equation}
\Tr P^2=\frac{1}{d}\left(1+\sum_{\alpha=1}^{d+1}\sum_{k=1}^{d-1}
|x_{\alpha k}|^2\right)=1,
\end{equation}
and hence
\begin{equation}
\sum_{\alpha=1}^{d+1}\sum_{k=1}^{d-1}|x_{\alpha k}|^2=d-1.
\end{equation}
Therefore, the minimal value of $F(P,\Lambda[P])$ is attained when
\begin{equation*}
x_{\alpha k} = 0\qquad\mbox{for}\qquad\alpha \neq \alpha_m
\end{equation*}
with $\alpha_m$ corresponding to $\lambda_{\alpha_m}=\lambda_{\rm min}$. Similarly, the maximal value of $F(P,\Lambda[P])$ is attained when
\begin{equation*}
x_{\alpha k} = 0\qquad\mbox{for}\qquad\alpha \neq \alpha_M
\end{equation*}
with $\alpha_M$ corresponding to $\lambda_{\alpha_M} = \lambda_{\rm max}$.
\end{proof}

%\section{Local and non-local evolution}

\section{Markovian semigroup vs. general dynamical map}

Consider the generalized Pauli channel evolution of the density matrix governed by the Markovian semigroup $\Lambda^{\rm MS}(t) = e^{t \mathcal{L}}$. This evolution is given by $\rho\longmapsto\rho(t)=\Lambda^{\rm MS}(t)[\rho]$, where $\{\Lambda^{\rm MS}(t)|t\geq 0\}$ is the family of the generalized Pauli channels with the initial condition $\Lambda^{\rm MS}(0)=\oper$. Clearly, it satisfies the semigroup property
\begin{equation*}
\Lambda^{\rm MS}(t) \Lambda^{\rm MS}(s) =\Lambda^{\rm MS}(t+s).
\end{equation*}
The corresponding time-independent operator $\mathcal{L}$ is given by
\begin{equation}\label{GEN}
\mathcal{L}=\sum_{\alpha=1}^{d+1}\gamma_\alpha\mathcal{L}_\alpha,\qquad\mathcal{L}_\alpha=\frac 1d \left[\mathbb{U}_\alpha-(d-1)\oper\right].
\end{equation}
The eigenvalues of $\mathcal{L}$ read $\mu_0=0$ and
\begin{equation}
\mu_\alpha=\gamma_\alpha-\gamma_0,
\end{equation}
where $\gamma_0=\sum_{\alpha=1}^{d+1}\gamma_\alpha$, $\mathcal{L}[U^k_\alpha]=\mu_\alpha U^k$.
Hence, the eigenvalues $\lambda^{\rm MS}_\alpha(t)$ of the corresponding $\Lambda^{\rm MS}(t)$ are equal to
\begin{equation}\label{eigenvalue2}
\lambda^{\rm MS}_\alpha(t)=\exp[-(\gamma_0-\gamma_\alpha)t].
\end{equation}
The more general evolution, which includes the memory effects, is provided by the memory kernel master equation
\begin{equation}\label{memory_kernel_equation}
\dot{\Lambda}(t)=\int_0^tK(t-\tau)\Lambda(\tau)\der\tau ,
\end{equation}
with the memory kernel
\begin{equation}\label{K}
K(t)=\sum_{\alpha=1}^{d+1}k_\alpha(t)\mathcal{L}_\alpha.
\end{equation}
From the eigenvalue equations of the memory kernel,
\begin{equation}\label{kernel_eigenvalue_eq}
K(t)[U_\alpha^k]=\kappa_\alpha(t) U_\alpha^k, \qquad K(t)[\mathbb{I}]=0,
\end{equation}
we see that it shares its eigenvectors with $\Lambda(t)$. The corresponding eigenvalues are
\begin{equation}
\kappa_\alpha(t)=k_\alpha(t)-k_0(t)
\end{equation}
with $k_0(t)=\sum_{\beta=1}^{d+1}k_\beta(t)$. Therefore, eq. (\ref{memory_kernel_equation}) is equivalent to the following evolution equation for the eigenvalues of $K(t)$ and $\Lambda(t)$,
\begin{equation}
\dot{\lambda}_\alpha(t)=\int_0^t\kappa_\alpha(t-\tau) \lambda_\alpha(\tau)\der\tau,
\end{equation}
with $\lambda_\alpha(0)=1$. Using the Laplace transform method of solving differential equations, we find
\begin{equation}\label{lambda_s}
\widetilde{\lambda}_\alpha(s)=\frac{1}{s-\widetilde{\kappa}_\alpha(s)},
\end{equation}
where $\widetilde{f}(s)=\int_0^\infty f(t)e^{-st} \der t$ is the Laplace transform of $f(t)$. Now, let us introduce the following parametrization,
\begin{equation}
\lambda_\alpha(t) = 1 - \int_0^t \ell_\alpha(\tau) \der \tau .
\end{equation}
In \cite{kasia17}, the authors provided the necessary and sufficient conditions for the legitimate memory kernels that generate the generalized Pauli dynamical maps.

\begin{Theorem}\label{TH1}
The memory kernel $K(t)$ given in eq. (\ref{K}) gives rise to a legitimate generalized Pauli dynamical map $\Lambda(t)$ if and only if its eigenvalues $\kappa_\alpha(t)$ are equal to
\begin{equation}\label{th_1}
\widetilde{\kappa}_\alpha(s)=-\frac{s \widetilde{\ell}_\alpha(s)}{1-   \widetilde{\ell}_\alpha(s)},
\end{equation}
where the functions $\ell_\alpha(t)$ satisfy the conditions
\begin{eqnarray}\label{CON}
   \int_0^t \ell_\alpha(\tau) \der \tau &\geq & 0, \nonumber \\
  \sum_{\alpha=1}^{d+1}   \int_0^t \ell_\alpha(\tau) \der \tau &\leq & \frac{d^2}{d-1}, \\
   \sum_{\alpha=1}^{d+1}   \int_0^t \ell_\alpha(\tau) \der \tau & \geq &  d\, \int_0^t \ell_\beta(\tau) \der \tau \nonumber
\end{eqnarray}
for $\beta=1,\ldots,d+1$.
\end{Theorem}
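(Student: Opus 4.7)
The approach is to diagonalize both sides of the memory kernel master equation (\ref{memory_kernel_equation}) in the common eigenbasis $\{U_\alpha^k\}$ of $K(t)$ and $\Lambda(t)$, which reduces the operator equation to a decoupled family of scalar Volterra convolutions $\dot{\lambda}_\alpha(t)=\int_0^t\kappa_\alpha(t-\tau)\lambda_\alpha(\tau)\,\der\tau$ as already displayed in the text. Passing to Laplace transforms yields (\ref{lambda_s}). The reparametrization $\lambda_\alpha(t)=1-\int_0^t\ell_\alpha(\tau)\,\der\tau$ has Laplace transform $\widetilde{\lambda}_\alpha(s)=[1-\widetilde{\ell}_\alpha(s)]/s$; substituting this into (\ref{lambda_s}) and solving algebraically for $\widetilde{\kappa}_\alpha(s)$ delivers (\ref{th_1}) in one line. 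This part of the proof is mechanical.

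The substantive step is to translate the Fujiwara-Algoet constraints (\ref{Fuji-d}) on the eigenvalues $\lambda_\alpha(t)$, which are the CPTP conditions for a generalized Pauli channel at each fixed $t$, into the three conditions (\ref{CON}) on $L_\alpha(t):=\int_0^t\ell_\alpha(\tau)\,\der\tau$. Since $\sum_\beta\lambda_\beta(t)=(d+1)-\sum_\beta L_\beta(t)$ and $\min_\beta\lambda_\beta(t)=1-\max_\beta L_\beta(t)$, the upper Fujiwara-Algoet bound $\sum_\beta\lambda_\beta\leq 1+d\min_\beta\lambda_\beta$ rearranges to $\sum_\beta L_\beta(t)\geq d L_\beta(t)$ for every $\beta$, matching the third line of (\ref{CON}); the lower bound rearranges to $\sum_\beta L_\beta(t)\leq d^2/(d-1)$, matching the second line; and the positivity $L_\alpha(t)\geq 0$, i.e.\ $\lambda_\alpha(t)\leq 1$, is itself forced by the upper Fujiwara-Algoet bound (applied with the index realizing $\lambda_{\max}$ one gets $\lambda_{\max}+d\min_\beta\lambda_\beta\leq 1+d\min_\beta\lambda_\beta$, hence $\lambda_{\max}\leq 1$). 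This gives the iff between (\ref{CON}) and complete positivity of $\Lambda(t)$ at every $t\geq 0$.

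Putting the two pieces together closes the theorem: (\ref{th_1}) is the master equation rewritten, while (\ref{CON}) is Fujiwara-Algoet rewritten. The main obstacle I expect lies in the sufficiency direction, specifically ensuring that from a prescribed $\widetilde{\kappa}_\alpha(s)$ one can genuinely reconstruct a legitimate memory kernel $K(t)$. Here I would use the structural ansatz (\ref{K}) together with $\kappa_\alpha=k_\alpha-k_0$ and $k_0=\sum_\beta k_\beta$, which is an invertible linear system giving $k_\alpha(t)=\kappa_\alpha(t)-\frac{1}{d}\sum_\beta\kappa_\beta(t)$; one must then verify that the (generally distributional) inverse Laplace transform of (\ref{th_1}) yields a kernel for which the convolution in (\ref{memory_kernel_equation}) makes sense and has $\lambda_\alpha(t)=1-L_\alpha(t)$ as its unique solution, so that the translated Fujiwara-Algoet conditions really certify $\Lambda(t)$ as a legitimate generalized Pauli dynamical map.
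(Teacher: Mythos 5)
The paper does not actually prove Theorem \ref{TH1}; it imports it verbatim from Ref.~\cite{kasia17}, so there is no in-text proof to compare against. Your reconstruction is correct and is the intended argument: diagonalization reduces (\ref{memory_kernel_equation}) to scalar Volterra equations whose Laplace transform gives (\ref{lambda_s}), the substitution $\widetilde{\lambda}_\alpha(s)=[1-\widetilde{\ell}_\alpha(s)]/s$ yields (\ref{th_1}) algebraically, and conditions (\ref{CON}) are exactly the Fujiwara--Algoet inequalities (\ref{Fuji-d}) rewritten for $\lambda_\alpha(t)=1-\int_0^t\ell_\alpha(\tau)\,\der\tau$ at each fixed $t$. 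Your side remark that the first line of (\ref{CON}) is already implied by the third (via $\lambda_{\max}\leq 1$) is a correct and worthwhile observation, and your closing caveat about reconstructing $k_\alpha(t)=\kappa_\alpha(t)-\frac{1}{d}\sum_\beta\kappa_\beta(t)$ from the (possibly distributional) inverse Laplace transform is handled at the same formal level in the cited reference, so nothing essential is missing.
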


\section{Examples}

%In this section, we analyze how the behaviour of the channel fidelity changes after we add the non-local effects to the Markovian evolution. Namely, we consider the memory kernels of the form
%\begin{equation}
%K(t)=\mathcal{L}\delta(t)+\mathbb{K}(t),
%\end{equation}
%where the Markovian semigroup generator $\mathcal{L}$ describes the time-local evolution, and $\mathbb{K}(t)$ is purely non-local (has no parts with the Dirac delta).

\subsection{Oscillations}

In \cite{zanardi17}, the authors analyzed the Pauli channels $\Lambda(t)$ whose evolution is governed by $K(t)=\mathcal{L}\delta(t)+\mathbb{K}(t)$ with
\begin{equation}\label{KL}
\mathcal{L}=\gamma\mathcal{L}_{\alpha_\ast},\qquad\mathbb{K}(t)=k(t)\mathcal{L}_{\alpha_\ast}
\end{equation}
for a fixed $\alpha_\ast\in\{1,2,3\}$, where $\mathcal{L}_{\alpha_\ast}$ is given by formula (\ref{GEN}) for $d=2$. An important property of this memory kernel is that both $\mathcal{L}$ and $\mathbb{K}(t)$ generate legitimate solutions. As the memory function $k(t)$, one considers \cite{zanardi17}
\begin{equation}
k(t)=\gamma B^2 e^{-t/T},
\end{equation}
where the constants $\gamma$, $B$, and $T$ are positive.

This example can be easily generalized to the generalized Pauli channels. One simply replaces the generator $\mathcal{L}_{\alpha_\ast}$ with a general $\mathcal{L}_{\alpha_\ast}$, $\alpha_\ast\in\{1,\dots,d+1\}$ defined in eq. (\ref{GEN}).
Now, observe that the memory kernel $K(t)$ is associated with $\ell_{\alpha_\ast}(t)=0$ and
\begin{equation}\label{ell}
\ell_\alpha(t)=\displaystyle\frac{\gamma}{\zeta}e^{-\frac{(1+\gamma T)t}{2T}}\Bigg[(1+2B^2T-\gamma T)
\sin\displaystyle\frac{\zeta t}{2T}
+\zeta\cos\displaystyle\frac{\zeta t}{2T}\Bigg]
\end{equation}
for every $\alpha\neq\alpha_\ast$ with (possibly complex)
\begin{equation}
\zeta:=\sqrt{-(1-\gamma T)^2+4\gamma B^2T^2}.
\end{equation}
For the corresponding dynamical map $\Lambda(t)$, the eigenvalues $\lambda_{\alpha_\ast}(t)=1$ and $\lambda_\alpha(t)\equiv\lambda(t)$ for $\alpha\neq{\alpha_\ast}$. Therefore, the maximal and minimal channel fidelities are equal to $f_{\max}[\Lambda(t)]=1$ and
\begin{equation}\label{minf}
f_{\min}[\Lambda(t)]=\frac 1d \left[1+(d-1)\lambda(t)\right].
\end{equation}
The eigenvalues
\begin{equation}\label{lambda1b}
\lambda(t)=\frac{2BT\sqrt{\gamma}}{\zeta}
e^{-\frac{(1+\gamma T)t}{2T}}\cos\left(\frac{\zeta t}{2T}+\arctan\frac{\gamma T-1}{\zeta}\right)
\end{equation}
oscillate for $\zeta^2>0$ and decay exponentially for $\zeta^2<0$. For $d=2$, the choice of constants $\gamma$, $T$, and $B$ is arbitrary \cite{zanardi17}.

For $d>2$, the analysis of the necessary and sufficient conditions for oscillating eigenvalues $\lambda(t)$ is much more complicated. Therefore, we restrict our attention to the case where $T=1/\gamma$. Now, $\zeta=2B/\sqrt{\gamma}$ is always real, and hence
\begin{equation}\label{lambdaNL}
\lambda(t)=e^{-\gamma t}\cos (B\sqrt{\gamma}t)
\end{equation}
always oscillates.

\begin{Proposition}
The generalized Pauli channel with the eigenvalues given in eq. (\ref{lambdaNL}) describes a legitimate quantum evolution if and only if
\begin{equation}\label{4}
B\leq\frac{\pi\sqrt{\gamma}}{\ln(d-1)}.
\end{equation}
\end{Proposition}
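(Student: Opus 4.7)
The plan is to specialize the legitimacy conditions of Theorem~\ref{TH1} to this particular eigenvalue $\lambda(t)=e^{-\gamma t}\cos(B\sqrt{\gamma}t)$ and reduce the whole question to a scalar comparison. First I would substitute $\lambda_{\alpha_\ast}(t)\equiv 1$ and $\lambda_\alpha(t)\equiv\lambda(t)$ for $\alpha\neq\alpha_\ast$ into the parametrization $\lambda_\alpha(t)=1-\int_0^t\ell_\alpha(\tau)\der\tau$, obtaining $\int_0^t\ell_{\alpha_\ast}(\tau)\der\tau=0$ and $\int_0^t\ell_\alpha(\tau)\der\tau=1-\lambda(t)$ for the remaining $d$ indices.

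Next I would simplify each of the three inequalities in (\ref{CON}) under this ansatz. The first becomes $\lambda(t)\le 1$, which follows immediately from the envelope $|\lambda(t)|\le e^{-\gamma t}\le 1$. The third is an identity for $\beta\neq\alpha_\ast$ and reduces to $\lambda(t)\le 1$ for $\beta=\alpha_\ast$, so it adds nothing new. Only the second inequality, after summing the $d$ identical contributions, carries information; it tightens to the pointwise bound
\begin{equation*}
\lambda(t)\ge -\frac{1}{d-1}\qquad\text{for every }t\ge 0.
\end{equation*}

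The remaining task is to decide for which $B$ this pointwise bound holds for all $t$. Because $|\lambda(t)|\le e^{-\gamma t}$, consecutive negative lobes of $\cos(B\sqrt{\gamma}t)$ produce extrema of $\lambda$ whose amplitudes form a strictly decreasing sequence, so the infimum of $\lambda$ is controlled by the first negative half-period of the cosine. Evaluating $\lambda$ at the first cosine minimum $t_\ast=\pi/(B\sqrt{\gamma})$ yields $\lambda(t_\ast)=-e^{-\pi\sqrt{\gamma}/B}$, and imposing $-e^{-\pi\sqrt{\gamma}/B}\ge -1/(d-1)$ rearranges directly to (\ref{4}) after taking logarithms. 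Conversely, if $B$ violates (\ref{4}), then $\lambda(t_\ast)<-1/(d-1)$ already at this single instant, so the map cannot be CPTP and legitimacy fails.

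I expect the main obstacle to lie in the sufficiency direction of the last step: one has to argue that the value $\lambda(t_\ast)=-e^{-\pi\sqrt{\gamma}/B}$ really captures the infimum of $\lambda$ even though the true critical points, determined by $\tan(B\sqrt{\gamma}t)=-\sqrt{\gamma}/B$, are slightly displaced from the cosine minima. A short monotonicity argument comparing the amplitudes of successive extrema, combined with the envelope $e^{-\gamma t}$, should suffice to localize the binding constraint to the first negative lobe and to propagate the inequality $\lambda(t)\ge -1/(d-1)$ to every later time.
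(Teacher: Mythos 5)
Your reduction is correct and is essentially the paper's: specializing the conditions of Theorem \ref{TH1} to $\ell_{\alpha_\ast}=0$ and $\int_0^t\ell_\alpha(\tau)\der\tau=1-\lambda(t)$ reproduces exactly the pointwise Fujiwara--Algoet constraint $-\tfrac{1}{d-1}\le\lambda(t)\le 1$ that the paper invokes directly, so the two routes coincide. The problem is the step you yourself flag as the ``main obstacle'': it is not a removable technicality but a genuine gap, and it is precisely the step the paper glosses over when it asserts that the minimal value of $\lambda$ is attained at $t_\ast=\pi/(B\sqrt{\gamma})$. Writing $u=\sqrt{\gamma}/B$, the critical points of $\lambda(t)=e^{-\gamma t}\cos(B\sqrt{\gamma}t)$ satisfy $\tan(B\sqrt{\gamma}t)=-u$, so the first (and, by the damping of successive lobes, global) minimum sits at $t_m=(\pi-\arctan u)/(B\sqrt{\gamma})<t_\ast$, with
\begin{equation*}
\lambda(t_m)=-\frac{1}{\sqrt{1+u^2}}\,e^{-u(\pi-\arctan u)},\qquad
\frac{|\lambda(t_m)|}{e^{-\pi u}}=\frac{e^{u\arctan u}}{\sqrt{1+u^2}}>1\quad\text{for all }u>0,
\end{equation*}
the last inequality because $g(u)=u\arctan u-\tfrac12\ln(1+u^2)$ satisfies $g(0)=0$ and $g'(u)=\arctan u>0$. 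Since $\lambda$ is increasing on $(t_m,t_\ast)$, the value $\lambda(t_\ast)=-e^{-\pi u}$ strictly overestimates the infimum; checking the bound at $t_\ast$ therefore proves only the ``only if'' direction of (\ref{4}), and no monotonicity argument can ``localize the binding constraint'' at $t_\ast$, because it binds at $t_m$ where $\lambda$ is strictly smaller.

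The gap is fatal to the equivalence as stated. Take $d=17$, $\gamma=1$, $B=1$: condition (\ref{4}) holds ($1<\pi/\ln 16\approx 1.13$), yet $\lambda(3\pi/4)=-e^{-3\pi/4}/\sqrt{2}\approx-0.067<-1/16$, so complete positivity fails at $t=3\pi/4$. The correct necessary and sufficient condition is $(1+u^2)^{-1/2}e^{-u(\pi-\arctan u)}\le\tfrac{1}{d-1}$, which is strictly stronger than (\ref{4}) for every $d>2$ (for $d=2$ both are vacuous, since $|\lambda(t)|\le e^{-\gamma t}\le 1$). Your instinct about where the difficulty lies was exactly right; the conclusion to draw from it is that the asserted step is false rather than merely delicate, and that the proposition, as proved both in your proposal and in the paper, gives only a necessary condition.
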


\begin{proof}
For $\lambda(t)$ in eq. (\ref{lambdaNL}), condition (\ref{Fuji-d}) reduces to
\begin{equation}
-\frac{1}{d-1}\leq e^{-\gamma t}\cos (B\sqrt{\gamma}t)\leq 1,
\end{equation}
where the second inequality trivially holds. It is enough to check that the first inequality is satisfied for the minimal value of $\lambda(t)$; namely,
\begin{equation}
-\frac{1}{d-1}\leq e^{-\gamma t_\ast}\cos (B\sqrt{\gamma}t_\ast)
\end{equation}
with $t_\ast=\pi/B\sqrt{\gamma}$. This simplifies to
\begin{equation}
\frac{1}{d-1}\geq e^{-\pi\sqrt{\gamma}/B},
\end{equation}
which is equivalent to condition (\ref{4}).
\end{proof}

Now, let us consider the Markovian evolution generated by $\mathcal{L}$ in eq. (\ref{KL}). This corresponds to $\Lambda^{\rm MS}(t)$ with $\lambda_{\alpha_\ast}^{\rm MS}(t)=1$ and $\lambda_\alpha^{\rm MS}(t)\equiv\lambda^{\rm MS}(t)$ for $\alpha\neq{\alpha_\ast}$, where
\begin{equation}
\lambda^{\rm MS}(t)=e^{-\gamma t}.
\end{equation}
The maximal fidelity $f_{\max}[\Lambda^{\rm MS}(t)]=1$, whereas the minimal fidelity
\begin{equation}
f_{\min}[\Lambda^{\rm MS}(t)]=\frac 1d \left[1+(d-1)e^{-\gamma t}\right].
\end{equation}
If minimal fidelity (\ref{minf}) oscillates, then for some $t>0$, one has
\begin{equation}
f_{\min}[\Lambda(t)]<f_{\min}[\Lambda^{\rm MS}(t)].
\end{equation}
Note that if $T=1/\gamma$, then the above inequality holds for all $t>0$. Therefore, the interesting case of increased fidelity corresponds to $T\neq 1/\gamma$. One possible choice of such parameters is shown in Fig. 1.

\FloatBarrier
\begin{figure}[ht!]
\tiny
       \includegraphics[width=0.45\textwidth]{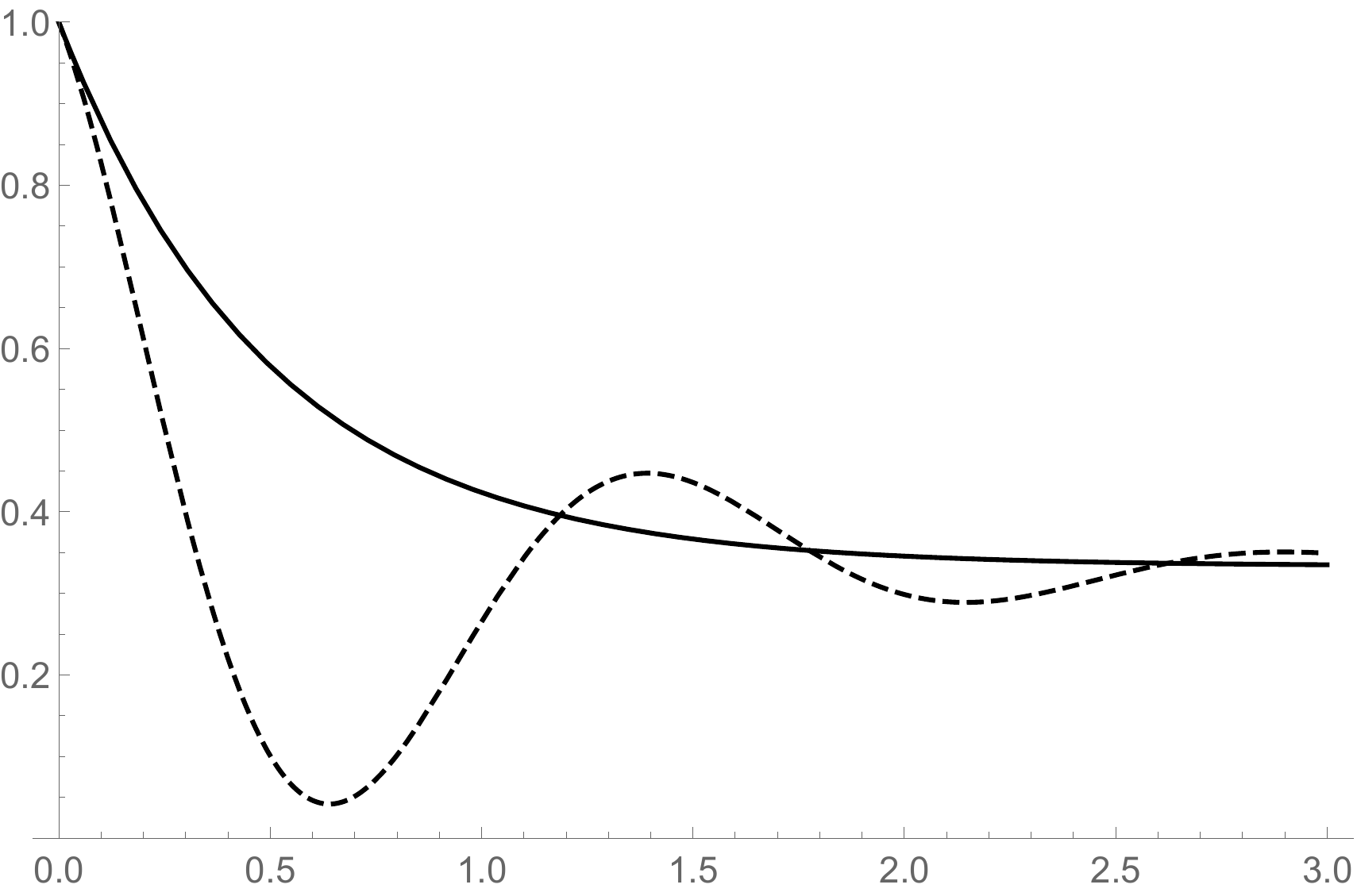}
\caption{\it The minimal channel fidelity for $d=3$, $\gamma=2s^{-1}$, and $T=2s$. The solid line corresponds to Markovian semigroup ($B=0$), and the dashed line to the general evolution with $B=3s^{-1/2}$.}
\end{figure}
\FloatBarrier

\subsection{Exponential decay I}

As the next example, let us consider the exponential functions
\begin{equation}\label{l1}
\ell_\alpha(t) = \eta e^{- \xi_\alpha t}.
\end{equation}
From Theorem \ref{TH1}, it follows that this choice leads to a legitimate dynamical map $\Lambda(t)$, provided that the coefficients $\eta$, $\xi_\alpha$ satisfy additional constraints. The necessary and sufficient conditions are presented in the following proposition.

\begin{Proposition}[\cite{kasia17}]
The functions $\ell_\alpha(t) = \eta e^{- \xi_\alpha t}$ result in a legitimate dynamical map $\Lambda(t)$ if and only if $\eta, \xi_\alpha > 0$ and
\begin{eqnarray}
% \nonumber to remove numbering (before each equation)
 \label{1a}
  \eta \sum_{\alpha=1}^{d+1} \frac{1}{\xi_\alpha}  &\leq& \frac{d^2}{d-1} , \\ \label{2a}
   \sum_{\alpha=1}^{d+1} \frac{1}{\xi_\alpha} &\geq &  \frac{d}{\xi_\beta}.
\end{eqnarray}
\end{Proposition}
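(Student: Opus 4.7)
My strategy is to apply Theorem \ref{TH1} directly: substitute the ansatz $\ell_\alpha(t) = \eta e^{-\xi_\alpha t}$ into the three conditions in \eqref{CON}, compute the explicit antiderivatives, and translate each condition into a constraint on $\eta$ and the $\xi_\alpha$'s. The key calculation is
\begin{equation*}
\int_0^t \ell_\alpha(\tau)\,\der\tau = \frac{\eta}{\xi_\alpha}\bigl(1 - e^{-\xi_\alpha t}\bigr)
\end{equation*}
(with the obvious limit $\eta t$ if $\xi_\alpha = 0$). All three conditions become inequalities in $t$ that I must enforce uniformly on $[0,\infty)$.

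The first two conditions I would dispatch quickly. Near $t = 0$, the integral equals $\eta t + O(t^2)$, so the nonnegativity condition forces $\eta > 0$. For the upper bound $\sum_\alpha\int_0^t\ell_\alpha \le d^2/(d-1)$, any index with $\xi_\alpha \le 0$ yields an integral that is unbounded in $t$, violating the bound; hence $\xi_\alpha > 0$ for all $\alpha$. Once all $\xi_\alpha > 0$, each summand $\eta(1-e^{-\xi_\alpha t})/\xi_\alpha$ is monotonically increasing to $\eta/\xi_\alpha$, so the tight constraint occurs as $t\to\infty$, reproducing \eqref{1a}.

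The third condition is the substantive one. Necessity of \eqref{2a} follows immediately by letting $t\to\infty$ in
\begin{equation*}
\sum_\alpha\frac{1-e^{-\xi_\alpha t}}{\xi_\alpha} \ge d\,\frac{1-e^{-\xi_\beta t}}{\xi_\beta}.
\end{equation*}
For sufficiency, I would introduce
\begin{equation*}
F_\beta(t) := \eta\sum_\alpha \frac{1-e^{-\xi_\alpha t}}{\xi_\alpha} - d\eta\,\frac{1-e^{-\xi_\beta t}}{\xi_\beta},
\end{equation*}
note that $F_\beta(0) = 0$, $F_\beta(\infty) \ge 0$ by \eqref{2a}, and $F_\beta'(0) = \eta > 0$, and then show $F_\beta \ge 0$ on all of $[0,\infty)$ by a sign-change analysis of
\begin{equation*}
F_\beta'(t) = \eta\Bigl(\sum_{\alpha\neq\beta} e^{-\xi_\alpha t} - (d-1)e^{-\xi_\beta t}\Bigr).
\end{equation*}

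The main obstacle is precisely controlling the shape of $F_\beta$. Here I would apply a Descartes-type bound for exponential sums: once the $\xi_\alpha$'s are listed in increasing order, the coefficient sequence of $F_\beta'$ has at most two sign changes, so $F_\beta'$ has at most two positive zeros. Combining with the boundary data $F_\beta'(0) = \eta > 0$ and the asymptotic sign of the slowest-decaying term (which is positive unless $\xi_\beta = \xi_{\min}$, in which case there is at most one sign change), I can conclude that $F_\beta$ attains its infimum only at $t = 0$ or $t = \infty$. Since both endpoint values are nonnegative, $F_\beta \ge 0$ throughout, closing the sufficiency direction and completing the equivalence.
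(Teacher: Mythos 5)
Your overall route---substituting $\ell_\alpha(t)=\eta e^{-\xi_\alpha t}$ into the three conditions \eqref{CON} of Theorem \ref{TH1}---is the intended one (the paper itself imports this Proposition from \cite{kasia17} without an in-text proof), and your treatment of the positivity of $\eta$ and $\xi_\alpha$ and of condition \eqref{1a} is correct. The genuine gap is in your sufficiency argument for the third condition when $\xi_\beta$ lies strictly between $\xi_{\min}$ and $\xi_{\max}$. There $F_\beta'$ indeed has at most two positive zeros, but with $F_\beta'(0)=\eta>0$ and the slowest-decaying coefficient positive, the generic shape of $F_\beta$ is up--down--up, and the infimum of such a function can sit at the \emph{interior} local minimum; the endpoint data $F_\beta(0)=0$ and $F_\beta(\infty)\geq 0$ do not exclude a negative dip, so your conclusion that the infimum is attained only at $t=0$ or $t=\infty$ does not follow. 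Concretely, take $d=3$, $(\xi_1,\xi_2,\xi_3,\xi_4)=(1,2,100,100)$, and $\beta$ the index with $\xi_\beta=2$: then $F_\beta(0)=0$, $F_\beta'(0)=\eta>0$, $F_\beta(\infty)=\eta\left(1+\frac{1}{2}+\frac{1}{100}+\frac{1}{100}-\frac{3}{2}\right)=\frac{\eta}{50}>0$, and the coefficient sequence of $F_\beta'(t)=\eta\left(e^{-t}-2e^{-2t}+2e^{-100t}\right)$, namely $(+1,-2,+2)$, has exactly two sign changes---yet
\begin{equation*}
\frac{F_\beta(\ln 2)}{\eta}=\frac{1}{2}+\frac{3}{8}+\frac{2\left(1-2^{-100}\right)}{100}-3\cdot\frac{3}{8}\approx -0.23<0 .
\end{equation*}
(This parameter set violates \eqref{2a} at the index with $\xi=1$, so the Proposition itself is untouched; the point is that your per-$\beta$ argument invokes only that same $\beta$'s endpoint condition, and the example shows this is not enough.)

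The missing idea is a reduction over $\beta$ before any sign-change analysis: for each fixed $t>0$ the function $\xi\mapsto\frac{1-e^{-\xi t}}{\xi}=\int_0^t e^{-\xi\tau}\,\der\tau$ is strictly decreasing, so the inequality $\sum_{\alpha}\int_0^t\ell_\alpha\,\der\tau\geq d\int_0^t\ell_\beta\,\der\tau$ is tightest, at every $t$ simultaneously, for $\beta_\ast$ with $\xi_{\beta_\ast}=\xi_{\min}$; equivalently $F_\beta(t)\geq F_{\beta_\ast}(t)$ pointwise, and \eqref{2a} for all $\beta$ is equivalent to \eqref{2a} at $\beta_\ast$ alone. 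For $\beta_\ast$ the coefficient sequence of $F_{\beta_\ast}'(t)=\eta\bigl(\sum_\alpha e^{-\xi_\alpha t}-d\,e^{-\xi_{\min}t}\bigr)$, ordered by increasing rate, is $(1-d,+1,\dots,+1)$ (after merging ties), i.e.\ it has at most one sign change, so $F_{\beta_\ast}$ rises and then falls, its infimum genuinely is attained at $t=0$ or $t=\infty$, and nonnegativity follows from $F_{\beta_\ast}(0)=0$ together with $F_{\beta_\ast}(\infty)\geq 0$, which is exactly \eqref{2a} at $\beta_\ast$. With this reduction inserted, your calculation of the antiderivatives and your necessity direction (letting $t\to\infty$) complete a correct proof; without it, the two-sign-change case remains open.
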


Note that the corresponding memory kernel $K(t)$ is given by
\begin{equation}\label{k1}
\begin{split}
k_\alpha(t) =& \frac 1d \eta\delta(t)
+\eta(\xi_\alpha-\eta)e^{-(\xi_\alpha-\eta)t}
\\&-\frac{1}{d}\sum_{\beta=1}^{d+1}\eta(\xi_\beta-\eta)e^{-(\xi_\beta-\eta)t}.
\end{split}
\end{equation}
%\begin{equation}
%\kappa_\alpha(t)=-\eta\left[\delta(t)+(\eta-\xi_\alpha)e^{(\eta-\xi_\alpha)t}\right],
%\end{equation}
%is unphysical when $k_\alpha(t\to\infty)\to\infty$, which occurs if there exists any $\alpha$ for which $\eta>\xi_\alpha$.
%Conditions (\ref{1a}-\ref{2a}) lead to
%\begin{equation}
%\xi_\alpha-\eta\geq-\frac{\eta}{d}.
%\end{equation}
%Hence, if $-\eta/d<\xi_\alpha-\eta<0$ for all $\alpha=1,\dots,d+1$, then the dynamical map $\Lambda(t)$ with
The associated dynamical map $\Lambda(t)$ has
\begin{equation}
\lambda_\alpha(t)=1-\frac{\eta}{\xi_\alpha}\left(1-e^{-\xi_\alpha t}\right).
\end{equation}
%is legitimate but the memory kernel $K(t)$ is not.
Now, the extreme values of the channel fidelity read
\begin{equation}\label{f1}
f_{\min}[\Lambda(t)]=1-\frac{(d-1)\eta}{d\xi_{\min}}
\left(1-e^{-\xi_{\min} t}\right),
\end{equation}
\begin{equation}\label{f2}
f_{\max}[\Lambda(t)]=1-\frac{(d-1)\eta}{d\xi_{\max}}
\left(1-e^{-\xi_{\max} t}\right).
\end{equation}
Observe that the minimal and maximal fidelities are reached at the minimal $\xi_{\min}$ and maximal $\xi_{\max}$ values of the parameters $\xi_\alpha$, respectively.

Using eq. (\ref{k1}), we decompose the memory kernel $K(t)$ into the Markovian generator
\begin{equation}\label{L1}
\mathcal{L}=\frac \eta d \sum_{\alpha=1}^{d+1}\mathcal{L}_\alpha
\end{equation}
and the memory kernel
\begin{equation}
\mathbb{K}(t)=\sum_{\alpha=1}^{d+1}\mathfrak{K}_\alpha(t)\mathcal{L}_\alpha
\end{equation}
with
\begin{equation}
\begin{split}
\mathfrak{K}_\alpha(t) =&\eta(\xi_\alpha-\eta)e^{-(\xi_\alpha-\eta)t}
\\&-\frac{1}{d}\sum_{\beta=1}^{d+1}\eta(\xi_\beta-\eta)e^{-(\xi_\beta-\eta)t}.
\end{split}
\end{equation}
Note that $\mathbb{K}(t)$ never produces legitimate solutions. Now, let us consider the evolution governed by $\mathcal{L}$ from eq. (\ref{L1}). For the corresponding dynamical map $\Lambda^{\rm MS}(t)$, it turns out that
\begin{eqnarray}\label{f3}
f_{\min}[\Lambda^{\rm MS}(t)]&= & f_{\max}[\Lambda^{\rm MS}(t)]\equiv f[\Lambda^{\rm MS}(t)] \nonumber \\
&=& \frac 1d \left(1+(d-1)e^{-\eta t}\right).
\end{eqnarray}
Interestingly, the fidelity $f[\Lambda^{\rm MS}(t)]$  can be lower than the fidelity $f[\Lambda(t)]$ for the evolution with non-local noise.

\begin{Proposition}\label{PP}
At any given $t>0$, the minimal channel fidelities in eqs. (\ref{f1}) and (\ref{f3}) satisfy the following inequalities:
\begin{enumerate}[label={(\arabic*)}]
\item if $\xi_{\min}<\eta$, then $f_{\min}[\Lambda(t)]<f[\Lambda^{\rm MS}(t)]$,

\item if $\xi_{\min}=\eta$, then $f_{\min}[\Lambda(t)]=f[\Lambda^{\rm MS}(t)]$,

\item if $\xi_{\min}>\eta$, then $f_{\min}[\Lambda(t)]>f[\Lambda^{\rm MS}(t)]$.
\end{enumerate}
%\begin{equation}
%\begin{split}
%&f_{\min}[\Lambda_{NL}(t;\xi_{\min}<\eta)]<f_{\max}[\Lambda_{NL}(t;\xi_{\max}<\eta)]\\&<
%f[\Lambda_L(t)]=f_{\min}[\Lambda_{NL}(t;\xi_{\min}=\eta)]=f_{\max}[\Lambda_{NL}(t;\xi_{\max}=\eta)]\\&<
%f_{\min}[\Lambda_{NL}(t;\xi_{\min}>\eta)]<f_{\max}[\Lambda_{NL}(t;\xi_{\max}>\eta)].
%\end{split}
%\end{equation}
\end{Proposition}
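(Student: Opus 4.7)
The plan is to reduce the three cases to a single monotonicity statement. Writing the two fidelities in the common form
\begin{equation*}
f[\Lambda^{\rm MS}(t)]=1-\frac{d-1}{d}\bigl(1-e^{-\eta t}\bigr),\qquad
f_{\min}[\Lambda(t)]=1-\frac{d-1}{d}\cdot\frac{\eta}{\xi_{\min}}\bigl(1-e^{-\xi_{\min} t}\bigr),
\end{equation*}
the sign of $f_{\min}[\Lambda(t)]-f[\Lambda^{\rm MS}(t)]$ is controlled entirely by whether
\begin{equation*}
\frac{\eta}{\xi_{\min}}\bigl(1-e^{-\xi_{\min} t}\bigr)\quad\text{is larger, equal to, or smaller than}\quad 1-e^{-\eta t}.
\end{equation*}
Dividing by $\eta>0$, this is a comparison of the values $g(\xi_{\min})$ and $g(\eta)$, where
\begin{equation*}
g(x):=\frac{1-e^{-xt}}{x},\qquad x>0,
\end{equation*}
with $t>0$ held fixed.

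Next I would show that $g$ is strictly decreasing on $(0,\infty)$ for every fixed $t>0$. A direct differentiation gives
\begin{equation*}
g'(x)=\frac{e^{-xt}(xt+1)-1}{x^{2}},
\end{equation*}
so the sign of $g'(x)$ is the sign of the auxiliary function $h(u):=e^{-u}(u+1)-1$ evaluated at $u=xt$. One has $h(0)=0$ and $h'(u)=-u\,e^{-u}<0$ for $u>0$, hence $h(u)<0$ for $u>0$, and therefore $g'(x)<0$ for all $x>0$. This is the only nontrivial step in the argument, and it is the step I expect to be the main (though mild) obstacle, since it requires producing the right auxiliary function.

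From strict monotonicity of $g$ the three cases follow immediately: if $\xi_{\min}<\eta$ then $g(\xi_{\min})>g(\eta)$, so the bracketed quantity in $f_{\min}[\Lambda(t)]$ exceeds the one in $f[\Lambda^{\rm MS}(t)]$, giving $f_{\min}[\Lambda(t)]<f[\Lambda^{\rm MS}(t)]$; the case $\xi_{\min}=\eta$ gives equality trivially; and $\xi_{\min}>\eta$ reverses the inequality. Since $g'(x)<0$ holds for every $t>0$, the comparison is valid at every positive time, as required by the statement.
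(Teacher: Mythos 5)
Your proof is correct and follows essentially the same route as the paper: both reduce all three cases to the strict monotonic decrease of $x\mapsto(1-e^{-xt})/x$ and compare its values at $\xi_{\min}$ and $\eta$. The only difference is that you actually verify the monotonicity by differentiation and the auxiliary function $e^{-u}(u+1)-1$, whereas the paper simply asserts it.
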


\begin{proof}
To prove the inequality $f_{\min}[\Lambda(t)]<f[\Lambda^{\rm MS}(t)]$ for $\xi_{\min}<\eta$, we show that
\begin{equation}
\begin{split}
&f_{\min}[\Lambda(t)]-f[\Lambda^{\rm MS}(t)]\\&
=\eta\frac{d-1}{d}\left[\frac{1-e^{-\eta t}}{\eta}-\frac{1-e^{-\xi_{\min}t}}{\xi_{\min}}\right]<0.
\end{split}
\end{equation}
This follows from the fact that the function
\begin{equation}
h(t;A)=\frac{1-e^{-A t}}{A}
\end{equation}
is monotonically decreasing with the increase of $A$ at a fixed $t>0$.
The proofs of the remaining relations are analogical.
\end{proof}

\begin{Proposition}\label{PP2}
At any given $t>0$, the maximal channel fidelities in eqs. (\ref{f2}) and (\ref{f3}) satisfy the following inequalities:
\begin{enumerate}[label={(\arabic*)}]
\item if $\xi_{\max}<\eta$, then $f_{\max}[\Lambda(t)]<f[\Lambda^{\rm MS}(t)]$,

\item if $\xi_{\max}=\eta$, then $f_{\max}[\Lambda(t)]=f[\Lambda^{\rm MS}(t)]$,

\item if $\xi_{\max}>\eta$, then $f_{\max}[\Lambda(t)]>f[\Lambda^{\rm MS}(t)]$,
\end{enumerate}
\end{Proposition}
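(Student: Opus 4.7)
The plan is to mirror the proof of Proposition \ref{PP} almost verbatim, since the two statements have the same structure: only the role of $\xi_{\min}$ is replaced by $\xi_{\max}$, while the Markovian reference fidelity $f[\Lambda^{\rm MS}(t)]$ in eq. (\ref{f3}) is unchanged. The formula (\ref{f2}) for $f_{\max}[\Lambda(t)]$ is obtained from (\ref{f1}) simply by swapping $\xi_{\min}\mapsto\xi_{\max}$, so the algebraic manipulation carries over without modification.

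First I would compute the difference of fidelities directly from eqs. (\ref{f2}) and (\ref{f3}) and reorganize it as
\begin{equation*}
f_{\max}[\Lambda(t)]-f[\Lambda^{\rm MS}(t)] = \frac{(d-1)\eta}{d}\Bigl[h(t;\eta)-h(t;\xi_{\max})\Bigr],
\end{equation*}
where $h(t;A)=(1-e^{-At})/A$ is the auxiliary function introduced in the proof of Proposition \ref{PP}. Then I would invoke the monotonicity of $A\mapsto h(t;A)$ at fixed $t>0$ (strictly decreasing), already established there. Case (3) follows because $\xi_{\max}>\eta$ gives $h(t;\eta)>h(t;\xi_{\max})$, so the bracket is strictly positive; case (2) yields equality of the two values of $h$ and hence of the fidelities; case (1) flips the sign and gives $f_{\max}[\Lambda(t)]<f[\Lambda^{\rm MS}(t)]$.

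No serious obstacle arises here: the analytic content (monotonicity of $h$) is already in hand from Proposition \ref{PP}, and the algebra leading to the bracket is identical. The only minor point worth verifying is that Theorem \ref{THM} delivers $f_{\max}$ via the index realizing $\lambda_{\max}$, and since $\lambda_\alpha(t)=1-\eta\,h(t;\xi_\alpha)$ is increasing in $\xi_\alpha$ (again by monotonicity of $h$), $\lambda_{\max}$ corresponds precisely to $\xi_{\max}$, so the substitution in (\ref{f2}) is justified.
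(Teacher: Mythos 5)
Your proof is correct and follows exactly the route the paper intends: the paper proves Proposition \ref{PP} by writing the fidelity difference as $\frac{(d-1)\eta}{d}[h(t;\eta)-h(t;\xi_{\min})]$ with $h(t;A)=(1-e^{-At})/A$ strictly decreasing in $A$, leaves Proposition \ref{PP2} to the same ``analogical'' argument, and your substitution $\xi_{\min}\mapsto\xi_{\max}$ reproduces it faithfully. The extra check that $\lambda_\alpha(t)=1-\eta\,h(t;\xi_\alpha)$ is increasing in $\xi_\alpha$, so that $\lambda_{\max}$ is attained at $\xi_{\max}$, is a correct (and welcome) verification of eq. (\ref{f2}).
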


%The proof is analogical to the proof of Proposition \ref{PP}.

%\FloatBarrier
%\begin{figure}[ht!]
%\tiny
%       \includegraphics[width=0.45\textwidth]{FIG_1B_.pdf}
%\caption{\it The extremal channel fidelities for $d=3$ and $\eta=1s^{-1}$. The continuous line corresponds to the local evolution, the dotted lines to the non-local evolution with $\xi_{\min}=7/9s^{-1}<\eta$ and $\xi_{\max}=35/36s^{-1}<\eta$, and the dashed lines to the non-local evolution with $\xi_{\min}=5/4s^{-1}>\eta$ and $\xi_{\max}=5/2s^{-1}>\eta$.}\label{exp}
%\end{figure}
%\FloatBarrier

\subsection{Exponential decay II}

This time, consider
\begin{equation}
\ell_\alpha(t)=\eta_\alpha\ell(t).
\end{equation}
For the exponential function $\ell(t)=e^{-\xi t}$, Theorem \ref{TH1} leads to the following proposition.

\begin{Proposition}
The functions $\ell_\alpha(t) = \eta_\alpha e^{- \xi t}$ produce a legitimate dynamical map $\Lambda(t)$ if and only if $\eta_\alpha, \xi > 0$ and
\begin{equation}\label{a3}
d\max_\beta\eta_\beta\leq\sum_{\alpha=1}^{d+1}\eta_\alpha\leq\frac{d^2\xi}{d-1}.
\end{equation}
\end{Proposition}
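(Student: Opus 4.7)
The plan is a direct application of Theorem \ref{TH1}. With $\ell_\alpha(t) = \eta_\alpha e^{-\xi t}$ and $\xi>0$, the elementary integration
\[
\int_0^t \ell_\alpha(\tau)\,\der\tau \;=\; \eta_\alpha\,g(t),\qquad g(t):=\frac{1-e^{-\xi t}}{\xi},
\]
reduces all three conditions in \eqref{CON} to statements about the coefficients $\eta_\alpha$ and $\xi$, since $g(t)$ is strictly positive for $t>0$, monotonically increasing on $[0,\infty)$, and has supremum $1/\xi$ approached as $t\to\infty$.

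With that observation in hand, I would translate the three requirements of Theorem \ref{TH1} one by one. Positivity of $\eta_\alpha g(t)$ for all $t$ forces $\eta_\alpha\geq 0$, and the strict case $\eta_\alpha>0$ of the proposition is recovered by excluding the trivial vanishing $\ell_\alpha\equiv 0$. The second condition $(\sum_\alpha\eta_\alpha)\,g(t)\leq d^2/(d-1)$ must hold for all $t$, so it is equivalent to the worst-case bound at $t\to\infty$, namely $\sum_\alpha \eta_\alpha/\xi\leq d^2/(d-1)$, which rearranges to the upper bound in \eqref{a3}. The third condition $(\sum_\alpha\eta_\alpha)\,g(t)\geq d\,\eta_\beta\,g(t)$ for every $\beta$ simplifies, after cancelling the common positive factor $g(t)$, to $\sum_\alpha\eta_\alpha\geq d\,\eta_\beta$ for every $\beta$, and taking the maximum over $\beta$ produces the lower bound in \eqref{a3}.

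To close the necessity direction I would also have to rule out $\xi\leq 0$. If $\xi=0$ then $\ell_\alpha(t)=\eta_\alpha$ and $\int_0^t\ell_\alpha\,\der\tau = \eta_\alpha t$ is unbounded; if $\xi<0$ then $g(t)=(e^{|\xi|t}-1)/|\xi|$ diverges as $t\to\infty$. In either case, as long as at least one $\eta_\alpha$ is nonzero, the upper-bound condition of Theorem \ref{TH1} is violated for sufficiently large $t$. Sufficiency, conversely, is immediate because all three integrated conditions are monotonic in $g(t)$ and the relevant inequalities have already been verified at the supremum.

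There is no genuine obstacle here: the only thing one has to be mindful of is that $g(t)$ attains its supremum only in the limit $t\to\infty$, so condition \eqref{a3} with the non-strict inequality $\leq d^2\xi/(d-1)$ is exactly what is needed — a strict inequality at all finite $t$ translates into a non-strict bound on the coefficients. The rest is bookkeeping.
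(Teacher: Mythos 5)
Your proposal is correct and follows the same route the paper intends: the paper gives no explicit proof, stating only that the result follows from Theorem~\ref{TH1}, and your argument is precisely that direct application — integrating $\ell_\alpha$, using monotonicity of $(1-e^{-\xi t})/\xi$ to reduce the upper bound to its supremum $1/\xi$, and cancelling the common factor in the third condition. Your remarks on the boundary cases ($\eta_\alpha=0$ and $\xi\leq 0$) are sound and in fact slightly more careful than the paper's bare statement.
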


Now, the associated memory kernel $K(t)$ has
\begin{equation}\label{k2}
\begin{split}
k_\alpha(t) =& \frac 1d \left(\sum_{\beta=1}^{d+1}\eta_\beta-d\eta_\alpha\right)\delta(t)
+\eta_\alpha(\xi-\eta_\alpha)e^{-(\xi-\eta_\alpha)t}
\\&-\frac{1}{d}\sum_{\beta=1}^{d+1}\eta_\beta(\xi-\eta_\beta)e^{-(\xi-\eta_\beta)t},
\end{split}
\end{equation}
%is not physical if $k_\alpha(t\to\infty)\to\infty$. This is the case as long as there exists at least one $\alpha$ for which $\eta_\alpha>\xi$.
%Similarly to the previous example, if $-\eta_\alpha/d<\xi_\alpha-\eta<0$ for all $\alpha=1,\dots,d+1$, then the dynamical map $\Lambda(t)$ with the eigenvalues
whereas the eigenvalues of the dynamical map $\Lambda(t)$ are equal to
\begin{equation}
\lambda_\alpha(t)=1-\frac{\eta_\alpha}{\xi}\left(1-e^{-\xi t}\right).
\end{equation}
%is legitimate, unlike the memory kernel $K(t)$.
Finally, the minimal and maximal channel fidelities are given by
\begin{equation}\label{f4}
f_{\min}[\Lambda(t)]=1-\frac{(d-1)\eta_{\max}}{d\xi}
\left(1-e^{-\xi t}\right),
\end{equation}
\begin{equation}\label{f5}
f_{\max}[\Lambda(t)]=1-\frac{(d-1)\eta_{\min}}{d\xi}
\left(1-e^{-\xi t}\right),
\end{equation}
where $\eta_{\min}=\min_\alpha\eta_\alpha$ and $\eta_{\max}=\max_\alpha\eta_\alpha$.

The memory kernel $K(t)$ is decomposable into the Markovian generator
{\begin{equation}\label{L2}
\mathcal{L}=\frac 1d \sum_{\alpha=1}^{d+1}
\left(\sum_{\beta=1}^{d+1}\eta_\beta-d\eta_\alpha\right)\mathcal{L}_\alpha
\end{equation}}
and the memory kernel
\begin{equation}
\mathbb{K}(t)=\sum_{\alpha=1}^{d+1}\mathfrak{K}_\alpha(t)\mathcal{L}_\alpha,
\end{equation}
where
\begin{equation}
\begin{split}
\mathfrak{K}_\alpha(t) =&\eta_\alpha(\xi-\eta_\alpha)e^{-(\xi-\eta_\alpha)t}
\\&-\frac{1}{d}\sum_{\beta=1}^{d+1}\eta_\beta(\xi-\eta_\beta)e^{-(\xi-\eta_\beta)t}
\end{split}
\end{equation}
Again, the master equation with the kernel $\mathbb{K}(t)$ never produces legitimate solutions. Note that generator (\ref{L2}) has a more complicated structure than generator (\ref{L1}). It generates the dynamical map $\Lambda^{\rm MS}(t)$, for which the minimal and maximal fidelities do not coincide but are equal to
\begin{equation}\label{f6}
f_{\min}[\Lambda^{\rm MS}(t)]=\frac 1d \left(1+(d-1)e^{-\eta_{\max} t}\right),
\end{equation}
\begin{equation}\label{f7}
f_{\max}[\Lambda^{\rm MS}(t)]=\frac 1d \left(1+(d-1)e^{-\eta_{\min} t}\right).
\end{equation}
Let us analyze the above fidelities in comparison with $f_{\min}[\Lambda(t)]$ and $f_{\max}[\Lambda(t)]$ given in eqs. (\ref{f4}) and (\ref{f5}).

\begin{Proposition}
At any given $t>0$, the minimal channel fidelities in eqs. (\ref{f4}) and (\ref{f6}) satisfy the following inequalities:
\begin{enumerate}[label={(\arabic*)}]
\item if $\xi<\eta_{\max}$, then $f_{\min}[\Lambda(t)]<f_{\min}[\Lambda^{\rm MS}(t)]$,

\item if $\xi=\eta_{\max}$, then $f_{\min}[\Lambda(t)]=f_{\min}[\Lambda^{\rm MS}(t)]$,

\item if $\xi>\eta_{\max}$, then $f_{\min}[\Lambda(t)]>f_{\min}[\Lambda^{\rm MS}(t)]$,

\end{enumerate}
%\begin{equation}
%\begin{split}
%&f_{\min}[\Lambda_{NL}(t;\eta_{\max}>\xi)]<f_{\min}[\Lambda_L(t)]\\&=
%f_{\min}[\Lambda_{NL}(t;\eta_{\max}=\xi)]<f_{\min}[\Lambda_{NL}(t;\eta_{\max}<\xi)]\\&<
%f_{\max}[\Lambda_{NL}(t;\eta_{\min}>\xi)]<f_{\max}[\Lambda_L(t)]\\&=
%f_{\max}[\Lambda_{NL}(t;\eta_{\min}=\xi)]<f_{\max}[\Lambda_{NL}(t;\eta_{\min}<\xi)].
%\end{split}
%\end{equation}
\end{Proposition}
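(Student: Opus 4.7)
The plan is to reduce the three claims to a single monotonicity property of the function $h(t;A)=(1-e^{-At})/A$, exactly as in the proof of Proposition \ref{PP}. First I would form the difference of the two fidelities directly from their closed forms in eqs. (\ref{f4}) and (\ref{f6}):
\begin{equation*}
f_{\min}[\Lambda(t)]-f_{\min}[\Lambda^{\rm MS}(t)]
= \frac{(d-1)\eta_{\max}}{d}\Bigl[h(t;\eta_{\max})-h(t;\xi)\Bigr],
\end{equation*}
by rewriting $f_{\min}[\Lambda^{\rm MS}(t)] = 1-\frac{d-1}{d}(1-e^{-\eta_{\max} t})$ and factoring $\eta_{\max}$ out of the bracketed expression.

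Next I would invoke the fact, already established in the proof of Proposition \ref{PP}, that for every fixed $t>0$ the map $A\mapsto h(t;A)$ is strictly decreasing on $(0,\infty)$. Applied to the bracket above, this immediately yields the trichotomy: the bracket is negative, zero, or positive according to whether $\xi<\eta_{\max}$, $\xi=\eta_{\max}$, or $\xi>\eta_{\max}$, which proves cases (1)--(3) in turn. Since $\eta_{\max}>0$ by the legitimacy constraints (\ref{a3}), the prefactor does not affect the sign.

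There is no real obstacle here; the entire statement is a formal analogue of Proposition \ref{PP} with the role of $\xi_{\min}$ played by $\xi$ and the role of $\eta$ played by $\eta_{\max}$. The only thing that might be worth a brief remark is that although the roles of the parameters are swapped in Example II compared with Example I (here $\xi$ is common to all $\ell_\alpha$ and $\eta_\alpha$ is index-dependent), the minimal fidelity still singles out $\eta_{\max}$ through eq. (\ref{f4}) and $\eta_{\max}$ through eq. (\ref{f6}), so the comparison naturally collapses to comparing $h(t;\xi)$ with $h(t;\eta_{\max})$ and no new case analysis is needed.
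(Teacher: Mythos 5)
Your proof is correct and is essentially the paper's own argument: the paper proves the analogous Proposition \ref{PP} by forming exactly this difference and invoking the monotonic decrease of $h(t;A)=(1-e^{-At})/A$ in $A$, and leaves the present statement to the same reasoning with $\xi_{\min}\to\xi$ and $\eta\to\eta_{\max}$. Your algebra for the difference and the sign analysis check out, so nothing is missing.
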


\begin{Proposition}
At any given $t>0$, the minimal channel fidelities in eqs. (\ref{f4}) and (\ref{f6}) satisfy the following inequalities:
\begin{enumerate}[label={(\arabic*)}]
\item if $\xi<\eta_{\min}$, then $f_{\max}[\Lambda(t)]<f_{\max}[\Lambda^{\rm MS}(t)]$,

\item if $\xi=\eta_{\min}$, then $f_{\max}[\Lambda(t)]=f_{\max}[\Lambda^{\rm MS}(t)]$,

\item if $\xi>\eta_{\min}$, then $f_{\max}[\Lambda(t)]> f_{\max}[\Lambda^{\rm MS}(t)]$,

\end{enumerate}
\end{Proposition}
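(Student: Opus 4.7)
The plan is to mirror the proof of Proposition \ref{PP}, reducing the comparison between $f_{\max}[\Lambda(t)]$ and $f_{\max}[\Lambda^{\rm MS}(t)]$ to the monotonicity of the auxiliary function $h(t;A) = (1-e^{-At})/A$ in the variable $A$, which was already established and invoked in the earlier proof.

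First I would form the difference of the relevant fidelities, using eq. (\ref{f5}) and eq. (\ref{f7}), and pull out the common prefactor $(d-1)/d$. A direct computation gives
\begin{equation*}
f_{\max}[\Lambda(t)] - f_{\max}[\Lambda^{\rm MS}(t)] = \frac{(d-1)\eta_{\min}}{d}\left[\frac{1-e^{-\eta_{\min} t}}{\eta_{\min}} - \frac{1-e^{-\xi t}}{\xi}\right],
\end{equation*}
which I recognize as $\tfrac{(d-1)\eta_{\min}}{d}\bigl[h(t;\eta_{\min}) - h(t;\xi)\bigr]$ in the notation of the proof of Proposition \ref{PP}.

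Next I would invoke the fact that, at any fixed $t>0$, the map $A \mapsto h(t;A)$ is strictly monotonically decreasing. Then the sign of the bracket is determined solely by whether $\xi$ is smaller than, equal to, or larger than $\eta_{\min}$: for $\xi<\eta_{\min}$ one has $h(t;\xi)>h(t;\eta_{\min})$, so the difference is negative; for $\xi=\eta_{\min}$ the two fidelities coincide; for $\xi>\eta_{\min}$ the difference is positive. This yields cases (1)--(3) of the proposition.

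There is no real obstacle here, since the argument is structurally identical to Proposition \ref{PP} with $\eta$ replaced by $\eta_{\min}$ and the minimal-fidelity formulas replaced by the maximal ones. The only point to check carefully is that when $\eta_\alpha$ is the same for every $\alpha$, the $\xi_{\min}$ of the previous setting is here replaced by $\xi$ itself, while the role of $\eta$ is played by $\eta_{\min}$ because $f_{\max}[\Lambda(t)]$ is controlled by $\eta_{\min}$ through eq. (\ref{f5}) and $f_{\max}[\Lambda^{\rm MS}(t)]$ is controlled by $\eta_{\min}$ through eq. (\ref{f7}); with this matching, the monotonicity argument applies verbatim.
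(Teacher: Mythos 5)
Your proof is correct and follows essentially the same route the paper intends: it reduces the comparison to the monotonicity in $A$ of $h(t;A)=(1-e^{-At})/A$, exactly as in the proof of Proposition \ref{PP}, which the paper declares to be ``analogical'' for the remaining propositions. You also correctly read the statement as comparing the \emph{maximal} fidelities of eqs. (\ref{f5}) and (\ref{f7}) (the proposition's reference to the minimal fidelities in eqs. (\ref{f4}) and (\ref{f6}) is a typo), and your algebra giving the difference as $\tfrac{(d-1)\eta_{\min}}{d}\bigl[h(t;\eta_{\min})-h(t;\xi)\bigr]$ checks out.
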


%The proofs are analogical to the proof to Proposition \ref{PP}.

%\FloatBarrier
%\begin{figure}[ht!]
%\tiny
%       \includegraphics[width=0.45\textwidth]{FIG_2B_.pdf}
%\caption{\it The extremal channel fidelities for $d=3$, $\eta_{\min}=1s^{-1}$, and $\eta_{\max}=4/3s^{-1}$. The continuous lines correspond to the local evolution, the dashed lines to the non-local evolution with $\xi=53/54s^{-1}<\eta_{\min}$, and the dotted lines to the non-local evolution with $\xi_{\max}=3/2s^{-1}>\eta_{\max}$. The lines do not cross since $t\simeq 4s$.}\label{exp2}
%\end{figure}
%\FloatBarrier

\begin{Example}
The choice of
\begin{equation}
\xi=d,\qquad \eta_\alpha=\frac{d}{1-x_\alpha}
\end{equation}
corresponds to the convex combination of Markovian semigroups \cite{kasia17}
\begin{equation}\label{CC}
\begin{split}
\Lambda(t)=&\sum_{\alpha=1}^{d+1}x_\alpha e^{dt\mathcal{L}_\alpha}=\frac 1d \Bigg[(1+[d-1]e^{-dt})\oper\\&+(1-e^{-dt})\sum_{\alpha=1}^{d+1}x_\alpha \mathbb{U}_\alpha\Bigg],
\end{split}
\end{equation}
where $x_\alpha$ is the probability distribution. The extremal values of the associated channel fidelity are
\begin{equation}
f_{\min}[\Lambda(t)]=1-\frac{d-1}{d(1-x_{\min})}\left(1-e^{-dt}\right),
\end{equation}
\begin{equation}
f_{\max}[\Lambda(t)]=1-\frac{d-1}{d(1-x_{\max})}\left(1-e^{-dt}\right),
\end{equation}
where $x_{\min}=\min_\alpha x_\alpha$ and $x_{\max}=\max_\alpha x_\alpha$.
Note that the corresponding Markovian semigroup is governed by
\begin{equation}
\mathcal{L}=\sum_{\alpha=1}^{d+1}(1-x_\alpha)^{-1}\mathcal{L}_\alpha ,
\end{equation}
which leads to the dynamical map
\begin{equation}
\begin{split}
\Lambda^{\rm MS}(t)=&\frac{1}{d^2}
\Bigg[(1+[d-1]\sum_{\alpha=1}^{d+1}\lambda_\alpha^{\rm MS}(t))\oper
\\&+\sum_{\alpha=1}^{d+1}(1+d\lambda_\alpha^{\rm MS}(t)-\sum_{\beta=1}^{d+1}\lambda_\beta^{\rm MS}(t))
\mathbb{U}_\alpha\Bigg]
\end{split}
\end{equation}
with
\begin{equation}
\lambda_\alpha^{\rm MS}(t)=\exp\left[-\left(\sum_{\beta=1}^{d+1}\frac{1}{1-x_\beta}
-\frac{1}{1-x_\alpha}\right)t\right].
\end{equation}
Observe that
\begin{equation}
f_{\min}[\Lambda^{\rm MS}(t)]=\frac 1d \left(1+(d-1)e^{-\frac{dt}{1-x_{\min}}}\right),
\end{equation}
\begin{equation}
f_{\max}[\Lambda^{\rm MS}(t)]=\frac 1d \left(1+(d-1)e^{-\frac{dt}{1-x_{\max}}}\right).
\end{equation}
Moreover, $\eta_\alpha\geq\xi$, and therefore adding non-local effects always results in the minimal and maximal fidelities that are lower than or equal to the fidelities for the Markovian semigroup. In particular, for the eternally non-Markovian evolution ($x_\alpha=1/d$ for $\alpha=1,\dots,d$, $x_{d+1}=0$), one has $f_{\max}[\Lambda(t)]=f_{\max}[\Lambda^{\rm MS}(t)]$.
\end{Example}

\section{Engineering evolution of other quantities}

%entanglement, entropy, and coherence}}

\subsection{Quantum entanglement}

Let us examine the effects of sending one qudit of an entangled pair through the generalized Pauli channel. We analyze the evolution of entanglement $\rho_W\longmapsto\rho_W(t)=(\oper\otimes\Lambda(t))[\rho_W]$
for the maximally entangled state
\begin{equation}\label{Wer}
\rho_W=|\Phi^+\>\<\Phi^+|,\qquad|\Phi^+\>=\frac{1}{\sqrt{d}}\sum_{k=0}^{d-1}|k\>\otimes|k\>.
\end{equation}
In the case of qubits ($d=2$), one can measure entanglement using Wootters' concurrence \cite{Wootters1,Wootters2}
\begin{equation}
C(\rho)=\max\{0,\sqrt{r_1}-\sqrt{r_2}-\sqrt{r_3}-\sqrt{r_4}\},
\end{equation}
where $r_1\geq r_2\geq r_3\geq r_4$ are the eigenvalues of $\rho(\sigma_2\otimes\sigma_2)\overline{\rho}(\sigma_2\otimes\sigma_2)$.
Under the action of the Pauli channel $\Lambda(t)$, the concurrence of $\rho_W$ changes as follows,
\begin{equation}
\begin{split}
C[\rho_W(t)]=\frac 12 \max\Bigg\{0,&|\lambda_1(t)-\lambda_2(t)|-1-\lambda_3(t),\\
&|\lambda_1(t)+\lambda_2(t)|-1+\lambda_3(t)\Bigg\},
\end{split}
\end{equation}
with $\lambda_\alpha(t)$ being the eigenvalues of $\Lambda(t)$ to the Pauli matrices $\sigma_\alpha$. Observe that this formula reduces to
\begin{equation}
C[\rho_W(t)]=\frac 12 \max\left\{0,3\lambda_1(t)-1\right\}
\end{equation}
for $\lambda_3(t)=\lambda_2(t)=\lambda_1(t)\geq 0$, or to
\begin{equation}
C[\rho_W(t)]=\frac 12 |\lambda_i(t)+\lambda_j(t)|
\end{equation}
for $\lambda_k(t)=1$, $\{i,j,k\}=\{1,2,3\}$.

\begin{Example}\label{EX1}
Let us analyze how the concurrence of $\rho_W(t)$ changes depending on the type of channel $\Lambda(t)$. For the Markovian semigroup evolution with $\lambda_\alpha(t)=e^{-\eta t}$, one has
\begin{equation}\label{11}
C[\rho_W(t)]=\frac 12 \max\left\{0,3e^{-\eta t}-1\right\},
\end{equation}
which describes exponential decay until $t=\ln 3/\eta$. Now, introduce one of two different types of noise. The memory kernel evolution with exponentially decaying functions $\ell_\alpha(t)=\eta e^{-\xi t}$ leads to
\begin{equation}\label{22}
C[\rho_W(t)]=\frac 12 \max\left\{0,2-3\frac{\eta}{\xi}\left(1-e^{-\xi t}\right)\right\},
\end{equation}
where again we observe exponentional decay. However, the parameters can be chosen in such a way that the concurrence in eq. (\ref{22}) is decaying slower than in eq. (\ref{11}). In the second case, where
\begin{equation}
\ell_\alpha(t)=e^{-\eta t}\left[B\sqrt{\eta}\sin(B\sqrt{\eta}t)
+\gamma\cos(B\sqrt{\eta}t)\right]
\end{equation}
for $\alpha\neq\alpha_\ast$, the concurrence is simply
\begin{equation}
C[\rho_W(t)]=e^{-\eta t}|\cos (B\sqrt{\eta}t)|.
\end{equation}
There are two important observations to be made regarding the above formula. First, there is not a single moment in time beyond which the system is always in a separable state. Second, the state of the system becomes separable after constant periods of time $\Delta t=\pi/B\sqrt{\eta}$.
\end{Example}

The results of Example \ref{EX1} are shown in Fig. \ref{conc}. Note that adding noise to the evolution of a quantum system can prolong the entanglement or even lead to its revival.

\FloatBarrier
\begin{figure}[ht!]
\tiny
       \includegraphics[width=0.45\textwidth]{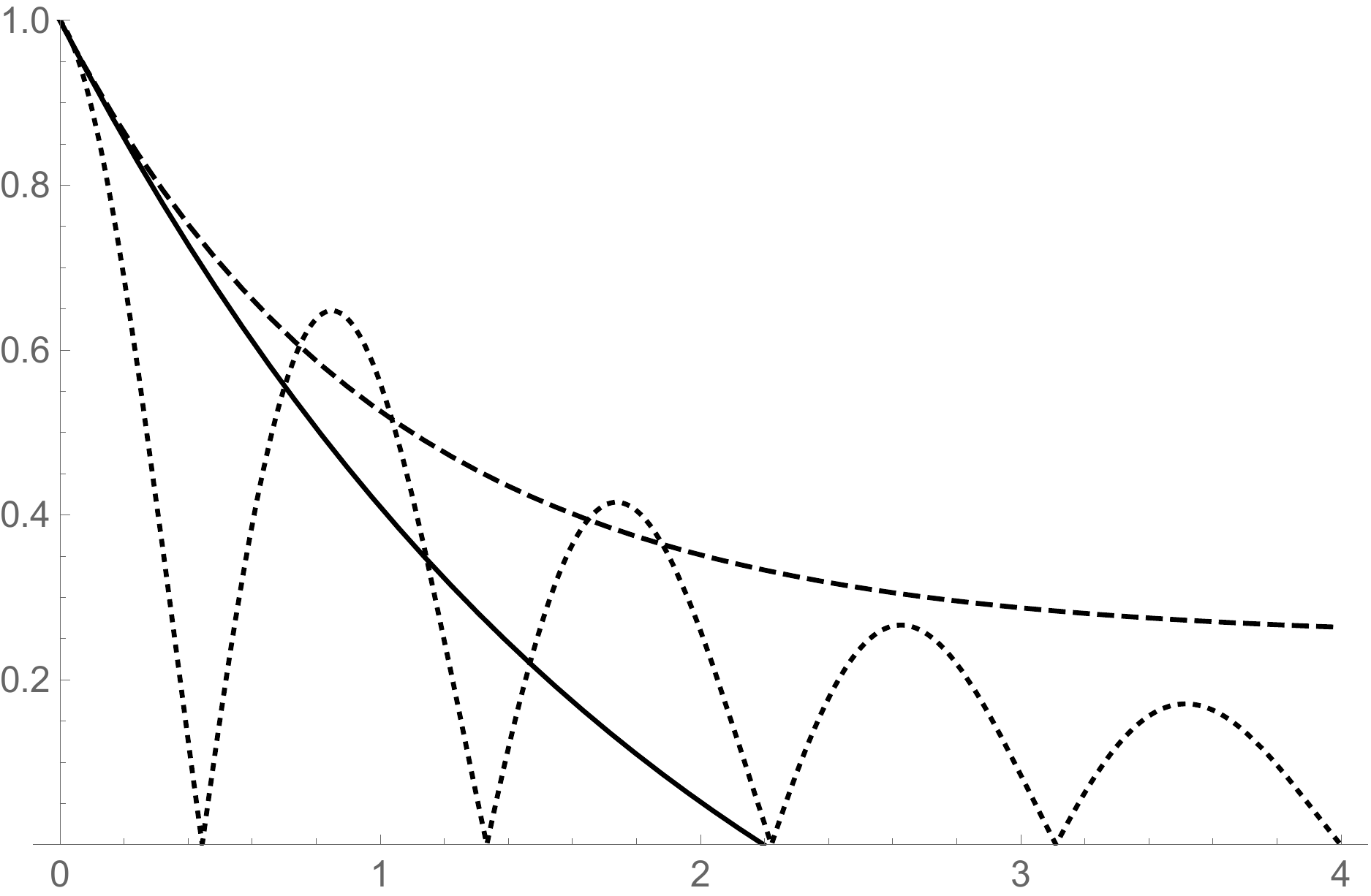}
\caption{\it The concurrence for $\eta=1/2s^{-1}$, $\xi=1s^{-1}$, and $B=5s^{-1/2}$. The continuous line corresponds to the local evolution, the dotted line to the non-local evolution with oscillations, and the dashed line to the non-local evolution with exponential decay.}\label{conc}
\end{figure}
\FloatBarrier

For $d>2$, there are unfortunately no known entanglement measures that detect all entangled states and are analytically computable. Consider the logarithmic negativity \cite{LN,MPlenio}
\begin{equation}
\mathcal{N}(\rho)=\log_2||\rho^{T_2}||_1,\qquad||\rho||_1=\Tr\sqrt{\rho^\dagger \rho},
\end{equation}
where $\rho^{T_2}$ denotes the partial transposition with respect to the second subsystem. Note that this measure does not detect PPT states. For $d=3$, the trace norm of the partially transposed $\rho_W$ is equal to
\begin{equation}\label{norm}
\begin{split}
||\rho_W^{T_2}(t)||_1=\frac 16 \Big[2|1-\lambda_0(t)|+|2+\lambda_0(t)
+\sqrt{Z(t)}|&\\+|2+\lambda_0(t)-\sqrt{Z(t)}|\Big],&
\end{split}
\end{equation}
where $\lambda_0(t)=\sum_{\alpha=1}^{4}\lambda_\alpha(t)$ and
\begin{equation}
Z(t)=9\sum_{\alpha=1}^{4}\lambda_\alpha^2(t)
-6\sum_{\alpha=1}^4\sum_{\beta>\alpha}\lambda_\alpha(t)\lambda_\beta(t).
\end{equation}
Note that eq. (\ref{norm}) simplifies to
\begin{equation}
||\rho_W^{T_2}(t)||_1= \left\{ \begin{array}{cl}
\frac 13 [1+8\lambda_1(t)] & , \ \lambda_1(t)\geq\frac 14,\\
1 & , \  \lambda_1(t)<\frac 14
\end{array}  \right.
\end{equation}
for $\lambda_1(t)=\lambda_2(t)=\lambda_3(t)=\lambda_4(t)$ and
\begin{equation}
||\rho_W^{T_2}(t)||_1=\left\{ \begin{array}{cl}
1+2\lambda(t) & , \ \lambda(t)\geq 0,\\
1 & , \  \lambda(t)<0
\end{array}  \right.
\end{equation}
for $\lambda_{\alpha_\ast}(t)=1$, $\lambda_\alpha(t)=\lambda(t)$ ($\alpha\neq\alpha_\ast$).

\begin{Example}\label{EX2}
Let us analyze the behaviour of the trace norm $||\rho_W^{T_2}(t)||_1$ under the types of evolution considered in Example \ref{EX1}. For the Markovian semigroup evolution,
\begin{equation}\label{33}
||\rho_W^{T_2}(t)||_1=
\left\{ \begin{array}{cl}
1 & , \ e^{-\eta t}\leq\frac 14,\\
\frac 13 \left(1+8e^{-\eta t}\right) & , \ e^{-\eta t}>\frac 14
\end{array}  \right.
\end{equation}
decays exponentially until $t=\ln 4/\eta$. If the functions $\ell_\alpha(t)$ decay exponentially, then
\begin{equation}
||\rho_W^{T_2}(t)||_1=
\left\{ \begin{array}{cl}
1 & , \ e^{-\xi t}\leq 1-\frac{3\xi}{4\eta},\\
3-\frac{8\eta}{3\xi}\left(1-e^{-\xi t}\right) & , \ e^{-\xi t}> 1-\frac{3\xi}{4\eta}.
\end{array}  \right.
\end{equation}
Finally, for oscillating $\ell_\alpha(t)$ with $\gamma=1/T\equiv\eta$,
\begin{equation}\label{44}
||\rho_W^{T_2}(t)||_1=
\left\{ \begin{array}{cl}
1 & , \ \cos(B\sqrt{\eta}t)\leq 0,\\
1+2e^{-\eta t}\cos(B\sqrt{\eta}t) & , \ \cos(B\sqrt{\eta}t)>0.
\end{array}  \right.
\end{equation}
\end{Example}

The logarithmic negativity for Example \ref{EX2} is plotted in Fig. \ref{ln}. Comparing with the concurrence for $d=2$, the logarithmic negativity for oscillating functions vanishes for longer moments. Therefore, the state of the system remains either separable or PPT for finite periods of time.

\FloatBarrier
\begin{figure}[ht!]
\tiny
       \includegraphics[width=0.45\textwidth]{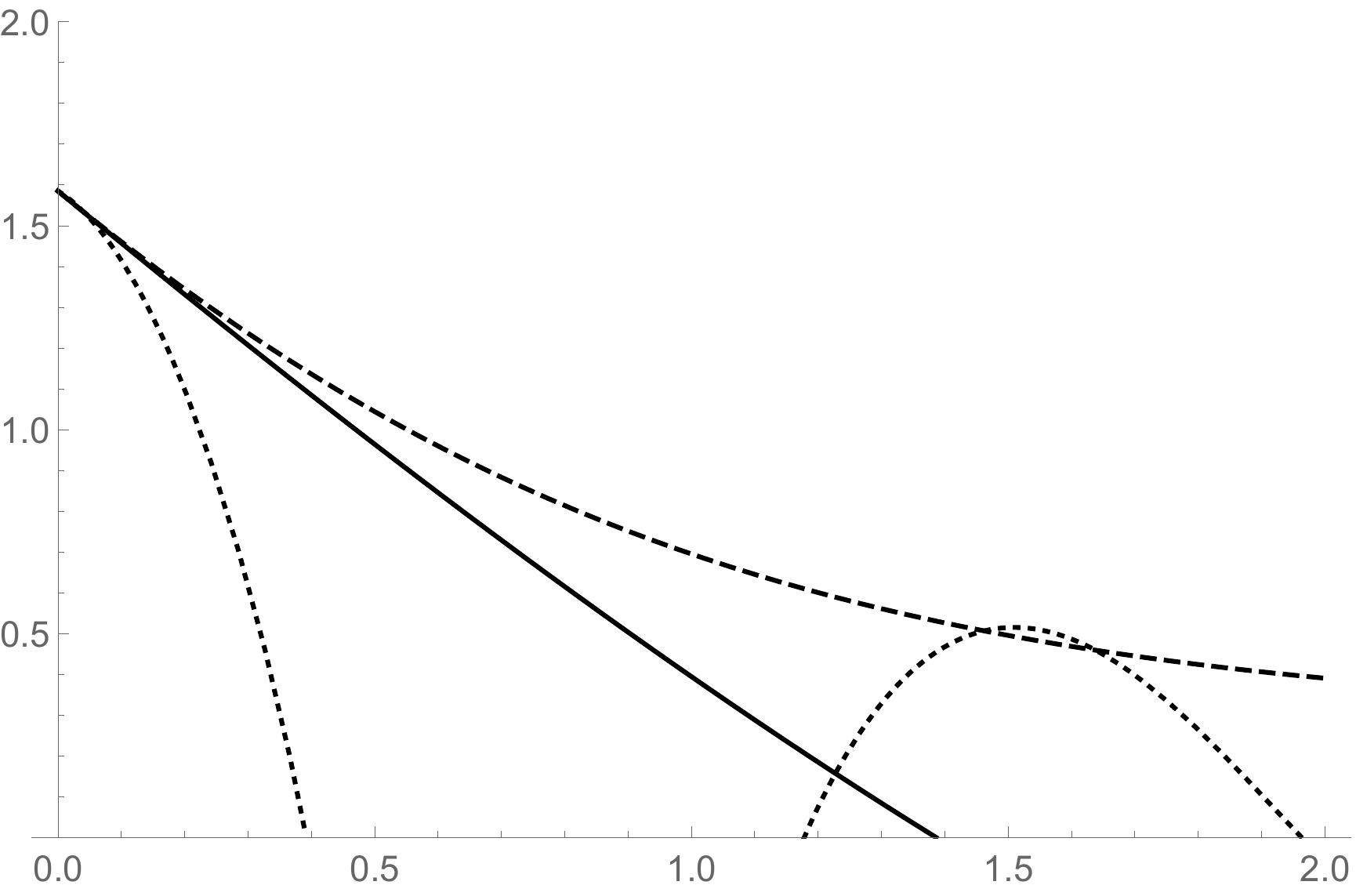}
\caption{\it The logarithmic negativity for $\eta=1s^{-1}$, $\xi=3/2s^{-1}$, and $B=4s^{-1/2}$. The continuous line corresponds to the local evolution, the dotted line to the non-local evolution with oscillations, and the dashed line to the non-local evolution with exponential decay.}\label{ln}
\end{figure}
\FloatBarrier

\subsection{Entropy}

Now, let us analyze the evolution of entropy for the projector $P_k^{(\alpha)}$ onto the mutually unbiased basis vector under the generalized Pauli channels $\Lambda(t)$. Observe that $\Lambda(t)$ transforms $P_k^{(\alpha)}$ into
\begin{equation}\label{rhoE}
\rho_{k,\alpha}(t)=\Lambda(t)[P_k^{(\alpha)}]=\sum_{j=0}^{d-1}\nu_jP_j^{(\alpha)},
\end{equation}
where
\begin{align}
&\nu_k=\frac 1d [1+(d-1)\lambda_\alpha(t)],\\
&\nu_j=\frac 1d [1-\lambda_\alpha(t)],\qquad j\neq k.
\end{align}
Therefore, the von Neumann entropy of the output state $\rho_{k,\alpha}(t)$ reads
\begin{equation}
S[\rho_{k,\alpha}(t)]=-\nu_k\ln\nu_k-(d-1)\nu_j\ln\nu_j,\qquad j\neq k.
\end{equation}
Note that its value depends only on the eigenvalue $\lambda_\alpha(t)$ for the distinguished $\alpha$. In Fig. \ref{entr}, it is shown that adding noise to the time-local evolution can bring more order to the system, which manifests itself in lower entropy.

\FloatBarrier
\begin{figure}[ht!]
\tiny
       \includegraphics[width=0.45\textwidth]{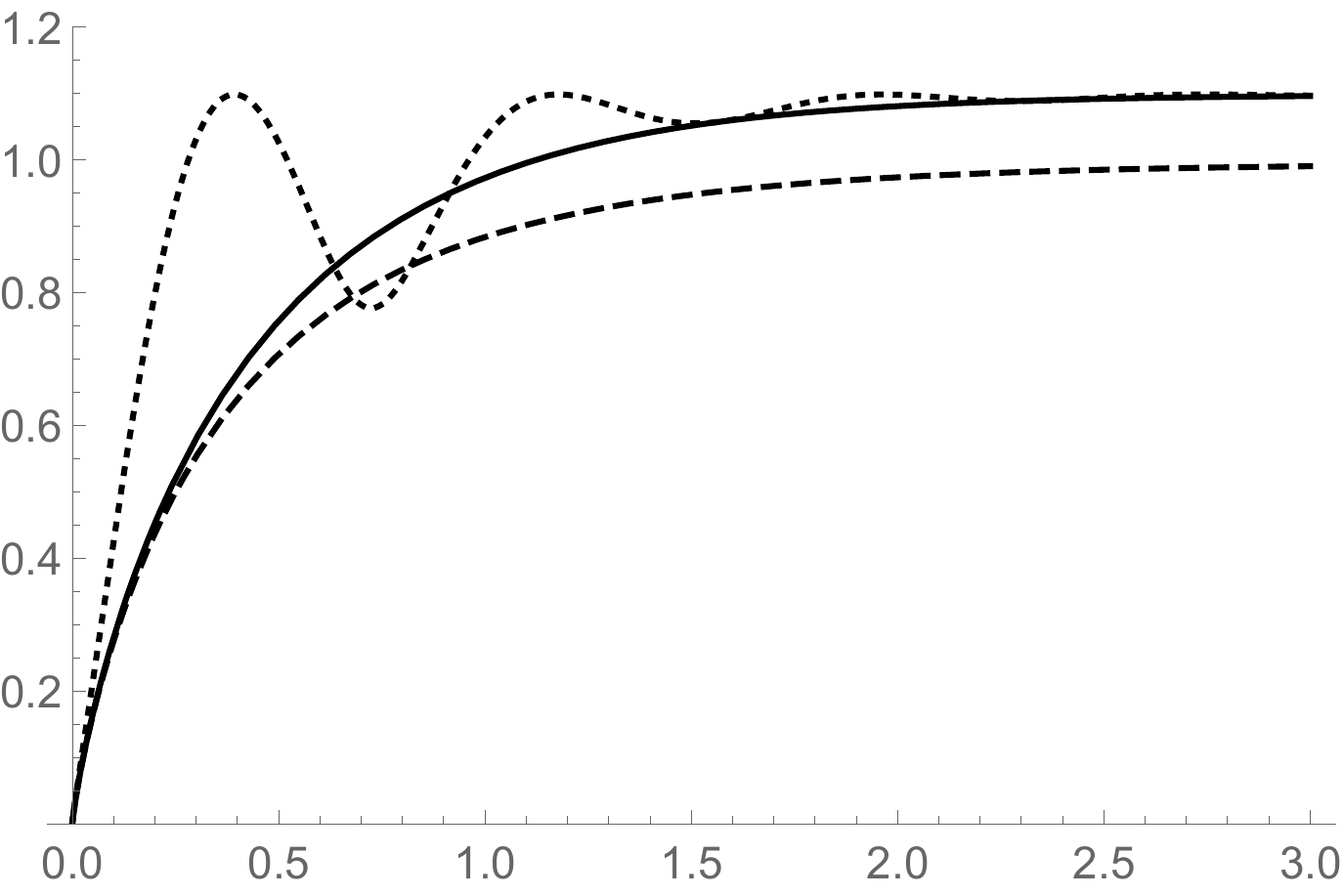}
\caption{\it The von Neumann entropy for $d=3$, $\eta=1s^{-1}$, $\xi=3/2s^{-1}$, $T=s$, and $B=4s^{-1/2}$. The continuous line corresponds to the local evolution, the dotted line to the non-local evolution with oscillations, and the dashed line to the non-local evolution with exponential decay.}\label{entr}
\end{figure}
\FloatBarrier

\subsection{Quantum coherence}

Consider now the evolution of quantum coherence. The most popular measure of quantum coherence is provided by $l_1$-norm  \cite{Cramer,Plenio}
\begin{equation}
\mathcal{C}_{l_1}[\rho]=\sum_{i=0}^{d-1}\sum_{j\neq i}|\<i|\rho|j\>|.
\end{equation}
For $d=3$ and $\rho_{k,\alpha}(t)$ given in eq. (\ref{rhoE}), one easily finds
\begin{equation}
\mathcal{C}_{l_1}[\rho_{k,\alpha}(t)]=
\left\{ \begin{array}{cl}
0 & , \ P_k^{(\alpha)}=|k\>\<k|,\\
\lambda_\alpha(t) & , \ P_k^{(\alpha)}\neq|k\>\<k| ,
\end{array}  \right.
\end{equation}
and it is plotted in Fig. \ref{coh}.

%Unfortunately, this result is not generalizable for $d>3$ in a straightforward manner.

\begin{figure}[ht!]  
\tiny
       \includegraphics[width=0.45\textwidth]{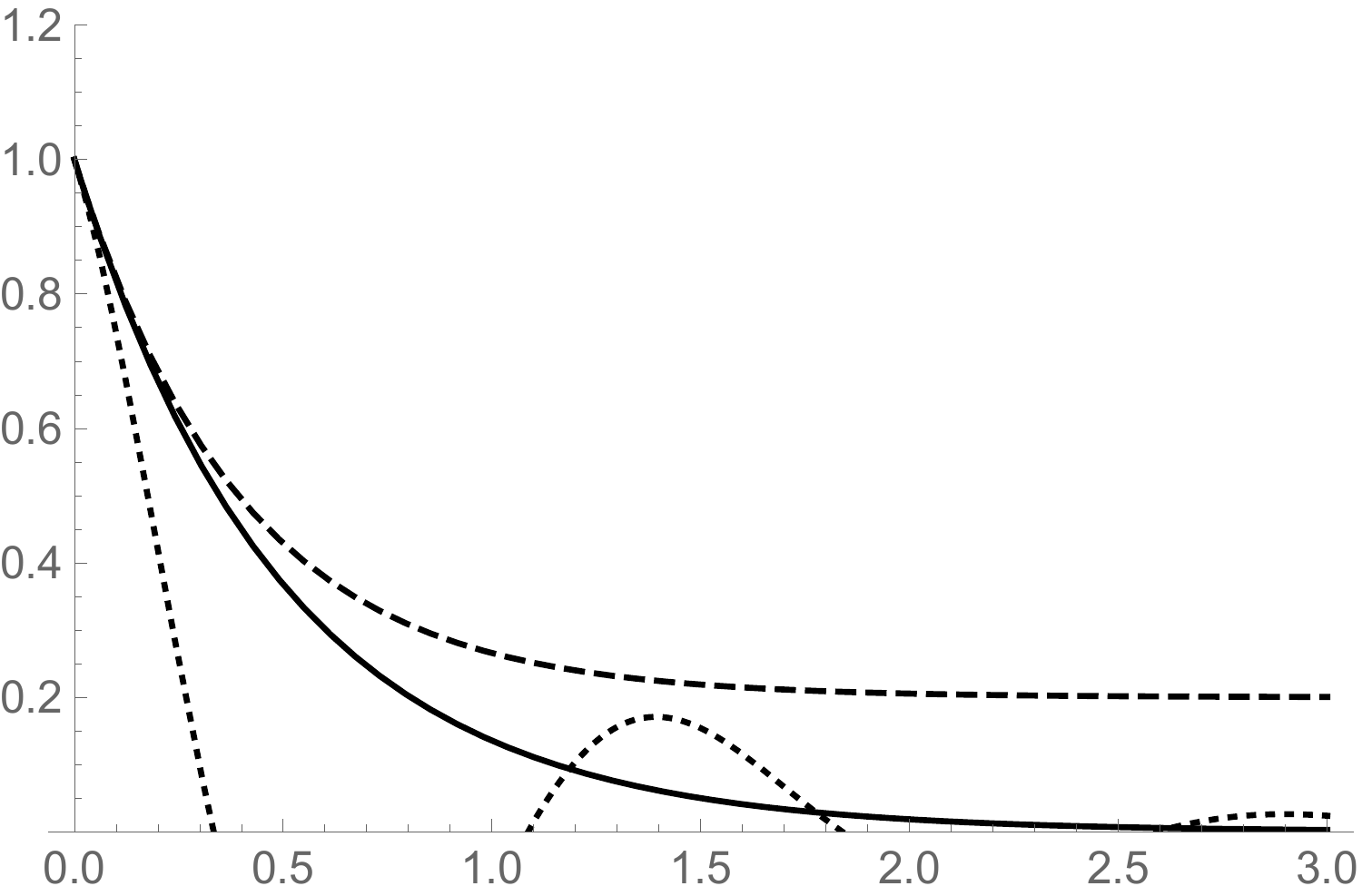}
\caption{\it The $l_1$-norm of coherence for $\eta=2s^{-1}$, $\xi=5/2s^{-1}$, $T=2s$, and $B=3s^{-1/2}$. The continuous line corresponds to the local evolution, the dotted line to the non-local evolution with oscillations, and the dashed line to the non-local evolution with exponential decay.}\label{coh}
\end{figure}
\FloatBarrier

\section{Conclusions}

We analyzed the channel fidelity of the generalized Pauli channels, which measures the distortion between the pure input and output quantum states.
We compared the evolution of fidelity for the Markovian semigroup $\Lambda^{\rm SM}$ generated by the GKSL generator $\mathcal{L}$with the general dynamical map $\Lambda(t)$ generated by the non-local memory kernel master equation with $K(t) = \delta(t)\mathcal{L} + \mathbb{K}(t)$ (with the same local part $\mathcal{L}$). It turns out that introducing non-local environmental noise $\mathbb{K}(t)$ to the Markovian evolution can increase the fidelity of the time-dependent channel $\Lambda(t)$. In other words, this results in the output states that are less distorted. Also, the additional noise can help to preserve entanglement for longer periods of time, as well as decrease the entropy and increase  the coherence of quantum states. Therefore, we showed that sending quantum information through the generalized Pauli channel generated by a non-local memory kernel can be more effective than through the channel generated by a purely Markovian generator. 
%This conclusion indicates that there is an alternative to fighting the environmental noise. Instead, one can benefit from the non-Markovian memory effects by preserving the quantum information for a longer period of time. 
These results support many other observations that a proper engineering of noise can be beneficial  for quantum information processing.

% where it has already been noticed that non-Markovian processes can improve the fidelity of quantum channels.

It would be interesting to investigate how the non-local noise $\mathbb{K}(t)$ can influence not only the channel fidelity but also the channel capacity. The latter problem is much more difficult due to the very nontrivial definition of the channel capacity. Some results in this direction were already derived in \cite{Bogna1}, where it was shown that non-Markovian memory effects can increase quantum capacity. It would be interesting to study the capacity problem for the generalized Pauli channels as well. Another issue is related to the very structure of the noise operator $\mathbb{K}(t)$. In the examples presented in this paper the memory kernel  $K(t) = \delta(t)\mathcal{L} + \mathbb{K}(t)$ generates legitimate quantum evolution, however the noise kernel $\mathbb{K}(t)$ alone does not. It would be interesting to search for the purely non-local noise kernels $\mathbb{K}(t)$ which generate legitimate dynamical maps. 

%There are still many open questions regarding the channel fidelity. 
%Observe that in Sections 4.2 and 4.3, the memory kernel master equations with the memory kernels $\mathbb{K}(t)$ lead, in general, to unphysical results. It would be interesting to find more examples where $\mathbb{K}(t)$ are legitimate memory kernels. Also, one could see what happens if the environmental noise is realized by introducing not the memory kernel $K(t)$ but the time-local generator $\mathcal{L}(t)$.

\section*{Acknowledgements} Authors were supported by the National Science Centre project 2015/17/B/ST2/02026.

%\bibliographystyle{mojformat2}
%\bibliography{bibliography}

%
%\section*{Appendix}
%
%\begin{Proposition}\label{prop}
%For $d>2$, the functions $\ell_\alpha(t)$ in eq. (\ref{ell}) satisfy the conditions in Theorem \ref{TH1} if
%\begin{equation}\label{pp}
%B\leq\frac{1}{\sqrt{d(d-2)T}}.
%\end{equation}
%\end{Proposition}
%
%\begin{proof}
%The eigenvalues $\lambda(t)$ describe a legitimate dynamical map if and only if the Fujiwara-Algoet conditions in (\ref{Fuji-d}) hold. Here, they are equivalent to
%\begin{equation}\label{FA}
%-\frac{1}{d-1}\leq\lambda(t)\leq 1.
%\end{equation}
%By differentiating $\lambda(t)$, we find that the local extrema are reached at $t_\ast\geq 0$ such that
%\begin{equation}
%\cos\left(\frac{\zeta t_\ast}{2T}+\arctan\frac{\gamma T-1}{\zeta}\right)=\pm
%\frac{\zeta}{\sqrt{\zeta^2+(1+\gamma T)^2}}.
%\end{equation}
%Now, observe that the extremal values read
%\begin{equation}
%\lambda(t_\ast)=\pm e^{-\frac{(1+\gamma T)t_\ast}{2T}}\frac{B\sqrt{T}}{\sqrt{1+B^2T}}.
%\end{equation}
%From (\ref{FA}), we see that it is enough if
%\begin{equation}
%\frac{B\sqrt{T}}{\sqrt{1+B^2T}}\leq\frac{1}{d-1}
%\end{equation}
%holds, which is equivalent to ineq. (\ref{pp}).
%\end{proof}
%
%The minimal channel fidelity oscillates only for $\zeta^2>0$; that is,
%\begin{equation}\label{osc}
%B>\frac{|1-\gamma T|}{2\sqrt{\gamma}T}.
%\end{equation}
%Therefore, {we know that the channel fidelity oscillates if $B$ belongs to the following range,}
%\begin{equation}\label{1}
%\frac{|1-\gamma T|}{2\sqrt{\gamma}T}<B\leq\frac{1}{\sqrt{d(d-2)T}}.
%\end{equation}

\end{document}